\renewcommand{\geq}{\geqslant}
\renewcommand{\leq}{\leqslant}
\newif\ifhideproofs
\theoremstyle{plain}
\newtheorem{theorem}{Theorem}
\newtheorem{lemma}[theorem]{Lemma}
\newtheorem{corollary}[theorem]{Corollary}
\newtheorem{definition}[theorem]{Definition}
\theoremstyle{definition}
\theoremstyle{remark}
\newtheorem*{note*}{Note}
\newtheorem*{remark*}{Remark}
\title{A Faster Parametric Search for the Integral Quickest Transshipment Problem}
\author[1]{Mariia Anapolska}
\author[1,*]{Dario van den Boom}
\author[1]{Christina Büsing}
\author[1]{Timo Gersing}
\affil[1]{\small Teaching and Research Area Combinatorial Optimization, RWTH Aachen University, Germany}
\affil[*]{Corresponding author: \texttt{van.den.boom@combi.rwth-aachen.de}}
\date{}  
\begin{document}

\maketitle

\begin{abstract} 
    Algorithms for computing fractional solutions to the quickest transshipment problem have been significantly improved since Hoppe and Tardos first solved the problem in strongly polynomial time.
    For integral solutions, runtime improvements are limited to general progress on submodular function minimization, which is an integral part of Hoppe and Tardos' algorithm.
    Yet, no structural improvements on their algorithm itself have been proposed.
    We replace two central subroutines in the algorithm with methods that require vastly fewer minimizations of submodular functions.
    This improves the state-of-the-art runtime from $ \tilde{\mathcal{O}}(m^4 k^{15}) 
    $ down to $ \tilde{\mathcal{O}}(m^2k^5 + m^4 k^2) $\footnote[2]{We use $ \tilde{\mathcal{O}} $ to suppress polylogarithmic terms.}, where $ k $ is the number of terminals and $ m $ is the number of arcs.
\end{abstract}

\paragraph{Funding}
This work is co-funded by the German research council (DFG) - project number 442047500 - Collaborative Research Center Sparsity and Singular Structures (SFB 1481) and by the German research council (DFG) Research Training Group 2236 UnRAVeL.

\newcommand{\dnet}{\mathcal{N}}

\newcommand{\shat}{\hat{s}}
\newcommand{\scheck}{\check{s}}

\newcommand{\backwards}[1]{{\overset{\leftarrow}{#1}}}
\newcommand{\forwards}[1]{{\overset{\rightarrow}{#1}}}
\newcommand{\residual}[1]{{\overset{\leftrightarrow}{#1}}}

\newcommand{\subpath}[3]{#1_{\mid #2 \text{, } #3}}

\newcommand{\excess}[3]{\text{ex}_{#1}(#2, #3)}
\newcommand{\net}[3]{\text{net}_{#1}(#2, #3)}

\newcommand{\orderto}[1]{(-\infty, #1]_\prec}
\newcommand{\ordertoexc}[1]{(-\infty, #1)_\prec}

\newcommand{\Naux}{\mathfrak{N}}
\newcommand{\saux}{\mathfrak{s}}
\newcommand{\aaux}{\mathfrak{a}}

\newcommand{\chains}{\overset{\leftrightarrow}{\mathcal{P}}}

\newcommand{\dario}[1]{\todo{DA: #1}}
\newcommand{\maria}[1]{\todo[color=yellow]{MA: #1}}

\newcommand{\megiddo}{Megiddo}
\newcommand{\maxAlpha}{\textsc{Maxi\-mize\-Alpha}\xspace}
\newcommand{\minDelta}{\textsc{Mini\-mize\-Delta}\xspace}
\newcommand{\sfm}{\operatorname{SFM}}
\newcommand{\mcf}{\operatorname{MCF}}
\newcommand{\define}{\coloneqq}

\newcommand{\restr}[2]{\tilde{#1}^{#2}}

\section{Introduction}

Network flows over time, also referred to as dynamic flows, extend classical static network flows by a time component. They provide a powerful tool for modeling real-world problems in traffic engineering, building evacuation, and logistics.
Over the last decades, a wide range of optimization problems dealing with flows over time have been studied.
The \emph{maximum flow over time problem} was studied in the seminal work of Ford and Fulkerson~\cite{ford1962flows}, who showed that the problem can be solved in polynomial time by a reduction to the static minimum cost flow problem.
The \emph{quickest flow problem} asks for the minimum time-horizon $ T^* $ such that a provided demand of $ D \in \mathbb{N} $ units of flow can be sent from a single source $ s $ to a single sink $ t $.
While a straightforward approach is to combine an algorithm for maximum flows over time with a parametric search algorithm \cite{burkard1993quickest}, recent results have shown that cost scaling algorithms for minimum cost flows can be modified in order to efficiently compute quickest flows~\cite{lin2014quickest,saho2017cancel}.

The \emph{quickest transshipment problem} generalizes the quickest flow problem by allowing for \emph{supply} and \emph{demand} at multiple \emph{sources} and \emph{sinks}.
It is one of the most fundamental problems in the field of network flows over time and, as recently stated by Skutella~\cite{skutella2023introduction}, ``arguably the most difficult flow over time problem that can still be solved in polynomial time.''
As in the quickest flow problem, our goal is to send the required flow from sources to sinks while simultaneously minimizing the time horizon.

Similar to the quickest flow problem, the quickest transshipment problem can be solved by determining the minimum time horizon via parametric search. 
Using this idea, Hoppe and Tardos~\cite{hoppe2000quickest} showed that the quickest transshipment problem can be solved in strongly polynomial time, that is, their algorithm's runtime is polynomially bounded in the number of nodes $ n $, number of arcs $ m $, and the combined number of sources and sinks $ k $.

Recently, faster algorithms have been developed.
Notably, Schlöter, Tran and Skutella~\cite{schloter2022faster} proposed an algorithm with a time complexity of $ \tilde{\mathcal{O}}(m^2k^5 + m^3 k^3 + m^3n) $.
Unfortunately, these performance improvements come at the expense of fractional solutions, which may be undesirable for applications that do not allow flow particles to be disassembled. 
While some of the results speed up the search for the optimal time horizon, no improvements have yet been proposed for finding integral flows over time.
Hence, the state-of-the-art complexity for the integral quickest transshipment problem remains unchanged at~$ \tilde{\mathcal{O}}(m^4 k^{15}) $~\cite{schloter2022faster}. 

\subsection*{Our Contribution}

We propose improvements to the algorithm by Hoppe and Tardos, reducing the state-of-the-art runtime for the integral quickest transshipment problem from~$ \tilde{\mathcal{O}}(m^4 k^{15}) $ to~$ \tilde{\mathcal{O}}(m^2 k^5 + m^4 k^2) $.
This narrows the gap to the fractional quickest transshipment problem, which can be solved in~$ \tilde{O}(m^2k^5 + m^3 k^3 + m^3n) $ time using the algorithm by Schlöter, Tran and Skutella~\cite{schloter2022faster}.

\section{Preliminaries}
Given a directed graph $ G = (V, A) $ with vertices $ V $ and arcs $ A $, we define a \emph{dynamic network} as a triple $ \dnet = (G, u, \tau) $ with \emph{capacity}~$ u_a \in \mathbb{N} $ and \emph{transit time}~$ \tau_a \in \mathbb{N}_0 $ for each arc~$ a \in A(\dnet) $.
For a given dynamic network $ \dnet $, the set $ V(\dnet) $ denotes the network's nodes, while $ A(\dnet) $ refers to the network's arcs.
Throughout this paper, we denote the number of nodes $ |V(\dnet)| $ by $ n $ and the number of arcs $ |A(\dnet)| $ by $ m $.

As our results do not directly require the definitions of flows over time, we only give a brief introduction and refer to the paper by Skutella \cite{skutella2009introduction} for more details.
A flow over time is a family of functions $ f_a \colon [0, T) \to \{0,\dots,u_a \} $, representing the in-flow rates for each arc~$ a \in A(\dnet) $ for every point in time until the end of the \emph{time horizon}~$ T \in \mathbb{N} $.
The~$ f_a(\theta) $-many flow units entering arc~$ a $ at time~$ \theta \in [0,T) $ arrive at time~$ \theta+\tau_a $ at the end node of $a$.
To satisfy dynamic flow conservation, all flow units entering a non-sink node must immediately be forwarded via an outgoing arc.

For the \emph{dynamic transshipment problem}, we are given a triple $ (\dnet, b, T) $ comprising a dynamic network $ \dnet$, a \emph{balance function} $ b \colon V(\dnet) \to \mathbb{Z} $ with $ \sum_{v \in V(\dnet)} b(v) = 0 $, and a time horizon $ T $.
The balances describe how supply and demand are distributed across the network.
A node with positive balance $ b(v) > 0 $ is a \emph{source}, while a node with negative balance is a \emph{sink}.
Let $ S^+ $ denote the set of sources, $ S^- $ the set of sinks, and $ S = S^+ \cup S^- $ the set of \emph{terminals}.
A dynamic transshipment instance is \emph{feasible} if there exists a flow over time sending the supply from the sources to the sinks such that all demands are satisfied.

\begin{definition} \label{def:out-flow}
    Given a subset of terminals $ X \subseteq S $, the \emph{maximum out-flow $ o(X) $ out of $ X $} is the value of the maximum flow over time from the sources $ S^+ \cap X $ to the sinks $ S^- \setminus X $.
\end{definition}

The central feasibility criterion states that the \emph{net balance} $ b(X) \coloneqq \sum_{v \in X} b(v) $ must not exceed the maximum out-flow $ o(X) $ for every $ X \subseteq S $.

\begin{theorem}[Feasibility Criterion \cite{hoppe2000quickest}] \label{theorem:feasibility}
    The dynamic transshipment instance $ (\dnet, b, T) $ is feasible if and only if $ v(X) \define o(X) - b(X) \geq 0 $ for all $ X \subseteq S $.
\end{theorem}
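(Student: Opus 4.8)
The plan is to prove the two implications separately: necessity of the cut conditions is a short flow-decomposition argument, whereas their sufficiency is the substantial direction and will proceed via the time-expanded network together with a classical static feasibility theorem.

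For necessity, suppose $(\dnet, b, T)$ admits a feasible flow over time $f$ within the horizon $T$, and fix an arbitrary $X \subseteq S$. I would decompose $f$ into source-to-sink chains over time. Let $\beta$ denote the total amount carried by chains that originate in $S^+ \cap X$ and terminate in $S^- \setminus X$, let $\alpha$ be the amount from $S^+ \cap X$ to $S^- \cap X$, and let $\gamma$ be the amount from $S^+ \setminus X$ to $S^- \cap X$. Since the sources in $X$ emit their full supply and the sinks in $X$ absorb their full demand, splitting $b(X) = \sum_{v \in X} b(v)$ into the supply and demand parts yields $b(X) = (\alpha + \beta) - (\alpha + \gamma) = \beta - \gamma \leq \beta$. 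The chains counted by $\beta$ form, on their own, a valid flow over time from $S^+ \cap X$ to $S^- \setminus X$ that respects all capacities within horizon $T$, so by \Cref{def:out-flow} we get $o(X) \geq \beta \geq b(X)$, i.e.\ $v(X) \geq 0$.

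For sufficiency, assume $v(X) \geq 0$ for every $X \subseteq S$. I would pass to the time-expanded network $\dnet_T$, in which a flow over time of horizon $T$ corresponds to a static $b$-transshipment, with the balances placed on the terminal copies. By the Gale--Hoffman feasibility theorem, such a static transshipment exists if and only if for every node set $\mathcal{U}$ of $\dnet_T$ the net supply inside $\mathcal{U}$ does not exceed the capacity of the arcs leaving $\mathcal{U}$. The task is then to show that these exponentially many (in $T$) cut conditions are all implied by the $2^{|S|}$ conditions $b(X) \leq o(X)$. The crucial link is that, for the terminal partition induced by $X$, the minimum over all time-expanded cuts $\mathcal{U}$ keeping $S^+ \cap X$ inside and $S^- \setminus X$ outside equals, by max-flow--min-cut in $\dnet_T$, precisely the maximum flow over time $o(X)$; hence the tightest cut condition for that signature is exactly $b(X) \leq o(X)$.

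The main obstacle is the reduction from arbitrary time-expanded cuts to terminal-induced ones: I must argue that it suffices to consider cuts in which all copies of a given terminal lie on the same side, and that for each terminal signature the worst cut is the one realizing $o(X)$. This is where submodularity of the set function $X \mapsto o(X)$ enters, since it guarantees that uncrossing two violating cuts again produces a (terminal-consistent) violating cut, so a minimal violating configuration may be taken to be terminal-induced. Establishing submodularity of $o$ and carrying out the uncrossing while carefully tracking the balance terms $b(X)$ under these set operations is the technical heart of the argument; once a terminal-induced violating cut is obtained, it directly contradicts $v(X) \geq 0$, completing the contrapositive.
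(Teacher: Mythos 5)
First, note that the paper offers no proof of \cref{theorem:feasibility}: it is imported verbatim from Hoppe and Tardos \cite{hoppe2000quickest}, so there is no in-paper argument to compare against, and your proposal must be judged on its own. Your necessity argument is complete and correct: decomposing a feasible flow over time into source--sink chains and retaining only those running from $S^+ \cap X$ to $S^- \setminus X$ yields a feasible flow over time of value $\beta \geq \beta - \gamma = b(X)$, whence $o(X) \geq b(X)$ by \cref{def:out-flow}.

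The sufficiency direction follows the right strategy (time-expanded network plus Gale--Hoffman), but you stop short of carrying out the one step you yourself flag as load-bearing, and you misdiagnose its difficulty: no uncrossing and no submodularity of $o$ are needed. Place each balance $b(s)$ on a single super-terminal node $s^*$ joined to all time copies of $s$ by infinite-capacity arcs (equivalently, give terminals infinite-capacity holdover arcs). Then for any node set $\mathcal{U}$ of the time-expanded network, $b(\mathcal{U}) = b(X)$ where $X$ is the set of terminals whose super-node lies in $\mathcal{U}$, so the question of copies of one terminal lying on different sides never arises. Moreover, such a $\mathcal{U}$ contains every $s^*$ with $s \in S^+ \cap X$ and excludes every $t^*$ with $t \in S^- \setminus X$, so it is itself a cut for the max-flow-over-time problem defining $o(X)$; by weak duality its outgoing capacity is at least $o(X) \geq b(X) = b(\mathcal{U})$, which is precisely the Gale--Hoffman condition for $\mathcal{U}$. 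Observe that you only need every cut of signature $X$ to have capacity at least $o(X)$, not that the minimum such cut \emph{equals} $o(X)$. Submodularity of $o$ is a consequence of this cut description rather than an ingredient of the proof, and the uncrossing argument you sketch is a detour that, as written, remains a plan rather than a proof. With the super-terminal construction made explicit, your argument closes.
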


We call a set $ X \subseteq S $ with $ v(X) < 0 $ a \emph{violated set}.
In order to determine the feasibility of a given dynamic transshipment instance, it suffices to show that no violated set exists.
However, while the value of $ o(X) $, and thus of $ v(X) $, can be computed using the Ford-Fulkerson algorithm for maximum flows over time, avoiding enumeration of all subsets $ X \subseteq S $ is not obvious.
Fortunately, we can utilize the \emph{submodularity} of $ o \colon 2^S \to \mathbb{N}_0 $ and $ v \colon 2^S \to \mathbb{Z} $.

\begin{definition}[Submodular Function]
    A function $ f \colon 2^E \to \mathbb{R} $ over a finite \emph{ground set} $ E $  is a \emph{submodular function} if for all $ X, Y \subseteq E $ it holds that 
    \begin{equation} \label{eq:submodular-function-classic}
        f(X \cup Y) + f(X \cap Y) \leq f(X) + f(Y),
    \end{equation}
    or, equivalently, if for all $ e \in E $ and $ X \subseteq Y \subseteq E \setminus \{ e \} $ it holds that
    \begin{equation}\label{eq:submodular-function}
        f(Y \cup \{ e \}) - f(Y) \leq f(X \cup \{ e \}) - f(X).
    \end{equation}
\end{definition}

Given a submodular function $ f $ over $ E $, we call a set~$ X^* \in \operatorname{argmin}_{X \subseteq E} f(X) $ a \emph{minimizer} of~$f$.
It is well-know that the set of minimizers of a submodular function is closed under union and intersection.
Therefore, there always exists a \emph{minimal minimizer} and a \emph{maximal minimizer}, which are the intersection and union of all minimizers, respectively.

In the context of \emph{submodular function minimization (SFM)}, a submodular function $ f $ is typically provided in form of an \emph{evaluation oracle} with complexity $ \mathcal{O}(\operatorname{EO}) $.
The performance of algorithms is measured in the number of oracle calls required for finding a minimizer.
The fastest strongly polynomial algorithm for submodular function minimization is due to Lee, Sidford and Wong~\cite{lee2015faster} with a runtime of $ \mathcal{O}(k^3 log^2 k \cdot \mathcal{O}(\operatorname{EO}) + k^4 log^{\mathcal{O}(1)} k) $ for~$ k = |E| $.

For our purpose, the evaluation oracle computes a maximum out-flow $ o(X) $ out of terminals $ X \subseteq S $ using the Ford-Fulkerson algorithm for maximum flows over time.
To this end, we introduce a super-source $ s^+ $ and a super-sink $ s^- $ and connect them to sources $ s \in S^+ \cap X $ and sinks $ t \in S^- \setminus X $, respectively, via infinite-capacity, zero-transit arcs. 
Then the maximum out-flow $ o(X) $ is the value of the maximum flow over time from $ s^+ $ to $ s^- $.
This value can be computed using a static min-cost flow in $ \mathcal{O}(m \log n (m + n \log n)) $ or $ \tilde{\mathcal{O}}(m^2) $ time via Orlin's algorithm \cite{orlin1988faster}.
We abbreviate this runtime as $ \mathcal{O}(\mcf(n,m)) $.

Consequently, determining whether a dynamic transshipment instance $ (\dnet, b, T) $ is feasible takes $ \mathcal{O}(k^3 log^2 k \cdot \mcf(n, m) + k^4 log^{\mathcal{O}(1)} k) $ or $ \tilde{\mathcal{O}}(m^2k^3) $ time, where $k = |S|$  is the number of terminals.
To improve readability, we denote the time it takes to check if an instance with $ k $ terminals, $ n $ nodes and $ m $ arcs is feasible by $ \mathcal{O}(\sfm(k, n, m)) $.

\subsection*{The Algorithm by Hoppe and Tardos}

The algorithm by Hoppe and Tardos \cite{hoppe2000quickest} was the first strongly polynomial algorithm computing quickest transshipments and remains the most efficient one for integral solutions.
In the following, we assume that the minimum time horizon $ T^* $ is provided and focus on the algorithm's segment that computes an integral dynamic transshipment.
The algorithm relies on the concept of tight orders.

\begin{definition}[Tight Set and Order]
    A set of terminals $ X \subseteq S $ is called \emph{tight} if $ o(X) = b(X) $ holds. 
    An order $ \prec $ over $ S $  is called \emph{tight} if sets $ \{ t' \in S \mid t' \preceq t \} $ are tight for all $ t \in S $.
\end{definition}

The general idea by Hoppe and Tardos is to construct an equivalent dynamic transshipment instance for which a tight order exists.

\begin{theorem}[Reduction to Lex-Max Flows \cite{hoppe2000quickest}] \label{theorem:tightness}
    Given a dynamic transshipment instance $ (\dnet, b, T) $ with a tight order $ \prec $ over the terminals $ S $, an integral dynamic transshipment satisfying $ b $ can be computed as a lex-max flow over time in $ \mathcal{O}(k \mcf(n, m)) $ time.
\end{theorem}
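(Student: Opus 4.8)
The plan is to exploit the telescoping structure that a tight order imposes on prefix out-flows. Write the terminals in the order $ s_1 \prec s_2 \prec \cdots \prec s_k $ and set $ S_i \define \{s_1,\dots,s_i\} $ with $ S_0 \define \emptyset $. Tightness of $ \prec $ means precisely that $ o(S_i) = b(S_i) $ for every $ i $. I would define a \emph{lex-max flow over time with respect to $ \prec $} to be a flow over time that maximizes the out-flow $ o(S_1) $, then, subject to this, maximizes $ o(S_2) $, and so on down the order. The crucial structural fact I would invoke is that a \emph{single} flow over time attains the maximum of $ o(S_i) $ simultaneously for all prefixes $ S_i $; this is the flow-over-time analogue of the classical result on lexicographically maximal static flows, and its existence is what makes the notion well defined.

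Granting this, correctness follows by a telescoping argument. Because the lex-max flow routes exactly $ o(S_i) $ units out of each prefix and $ o(S_i) = b(S_i) $ by tightness, the net amount leaving the individual terminal $ s_i $ equals $ o(S_i) - o(S_{i-1}) = b(S_i) - b(S_{i-1}) = b(s_i) $. Hence every terminal sends or receives exactly its balance, so the lex-max flow over time is a dynamic transshipment satisfying $ b $. Integrality is inherited from the integral capacities and transit times: the underlying static flows can be taken integral, and the flow over time recovered from them is integral as well.

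For the computation I would reduce the lex-max flow over time to a lexicographically maximal \emph{static} flow in the auxiliary network already used to evaluate $ o(\cdot) $ (super-source $ s^+ $ joined to the sources, super-sink $ s^- $ joined to the sinks by infinite-capacity zero-transit arcs). A lexicographically maximal static flow for $ k $ prioritized terminals is obtained by $ k $ successive maximum-flow computations, in each of which the out-flow values of the earlier prefixes are frozen at their optima while the current prefix out-flow is maximized; each such computation is a single static min-cost flow and therefore costs $ \mathcal{O}(\mcf(n,m)) $, giving the claimed $ \mathcal{O}(k\,\mcf(n,m)) $ bound. The resulting static flow is then expanded into a flow over time through a chain (path) decomposition that is routed repeatedly over the horizon $ T $, yielding an integral flow over time.

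The main obstacle, and the step I would spend most care on, is the structural claim in the first paragraph: that a lexicographically maximal flow over time exists as a single object realizing all prefix maxima at once, and that the static reduction together with the chain-decomposition expansion faithfully preserves both this lexicographic optimality and integrality. In particular one must verify that freezing the earlier prefix out-flows does not obstruct attaining the maximum on later prefixes --- this is where submodularity of $ o $ and the consistency of the auxiliary constructions across the $ k $ min-cost flow computations enter --- and that the temporally repeated expansion does not violate the per-terminal balances established by the telescoping identity.
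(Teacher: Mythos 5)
This theorem is imported from Hoppe and Tardos; the paper itself gives no proof, so your proposal can only be measured against their original argument. Your skeleton does match it: order the terminals, use tightness of every prefix $S_i$ so that the telescoping identity $o(S_i)-o(S_{i-1})=b(S_i)-b(S_{i-1})=b(s_i)$ turns a lexicographically maximal flow over time into a transshipment satisfying $b$, and compute that flow with $k$ min-cost flow calls. Two small cautions on the telescoping step: $o(X)$ treats sources in $X$ and sinks outside $X$ asymmetrically, so "the net amount leaving $s_i$" has to be read with the correct sign when $s_i$ is a sink, and the quantity that telescopes is the \emph{net} flow out of the prefix, not a sum of independent per-terminal maxima.

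The genuine gap is in the two claims you yourself flag and then do not discharge, and the second one is not merely unproved but, as stated, wrong. First, the existence of a single flow over time attaining $o(S_i)$ for \emph{every} prefix simultaneously is the main content of the Hoppe--Tardos lex-max theorem; it does not follow from submodularity of $o$ alone and is exactly what their construction is designed to establish, so assuming it assumes the theorem. Second, your computational route --- compute a lexicographically maximal \emph{static} flow, take a path decomposition, and temporally repeat it --- does not work for lex-max flows over time. Already for a single prefix, Ford--Fulkerson's reduction requires a \emph{min-cost} (transit-time-cost) flow, not a max flow with frozen prefixes, because the over-time value is $T\lvert x\rvert-\tau(x)$, not $\lvert x\rvert$. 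For several nested prefixes, Hoppe and Tardos compute a sequence of min-cost flows each in the \emph{residual} network of the previous one, decompose the increments into \emph{chains that may traverse backward (residual) arcs}, and then prove a separate nonnegativity lemma showing that the superposition of these generalized temporally repeated chain flows is a feasible flow over time. A plain forward-path decomposition of a single static flow cannot in general realize all prefix maxima at once, and the nonnegativity of the combined object is precisely where the work lies. Until you supply the residual-chain construction and the nonnegativity argument, the proof is an accurate table of contents of the Hoppe--Tardos proof rather than a proof.
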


We refer to \cite{skutella2023introduction} for more details on lex-max flows over time.
Although computing the integral flow is quite efficient, transforming $ (\dnet, b, T) $ into an equivalent instance $ (\dnet', b', T) $ with a tight order is computationally demanding.
We propose improvements to this transformation in \cref{sec:algorithm}. 
Before that, we briefly discuss the approach by Hoppe and Tardos.

First, one adds a new terminal $\scheck$ with $b(\scheck)=b(s)$ for every terminal $s\in S$ and then sets $b(s)=0$.
Note that the set of terminals $S$ now only contains the nodes $\scheck$.
By adding infinite-capacity, zero-transit arcs $(\scheck,s)$ for $ \scheck \in S^+ $ and $ (s, \scheck)$ for $ \scheck \in S^- $, one ensures that the resulting dynamic transshipment instance is equivalent to the original instance (cf.~\cref{fig:alpha-viz}).

To construct an instance admitting a tight order, we iteratively shift supply / demand from terminals $\scheck$ to new terminals $\shat$.
In this paper, we call the $\scheck$ \emph{drained terminals} and the $\shat$ \emph{filled terminals}.
We refer the reader to Hoppe and Tardos \cite{hoppe2000quickest} and \Cref{apx:pseudocode} for a general overview of the algorithm.
We focus on the two subroutines, \maxAlpha and \minDelta, which compute the amount of supply / demand that is shifted from $ \scheck $ to $ \shat $.

Discussing both subroutines, we assume that a feasible dynamic transshipment instance~$ (\dnet, b, T) $ is given, with a set of terminals $ S $ consisting of the drained terminals $ \scheck $, added in the first step, and all filled terminals $ \shat $ that were introduced in previous iterations.
Furthermore, we are given two tight sets $ Q \subset R \subseteq S $ and a drained terminal~$ \scheck \in R \setminus Q $ satisfying~$ o(Q \cup \{ \scheck \}) > b(Q \cup \{ \scheck \}) $.

Note that the instance $ (\dnet, b, T) $, its set of terminals $ S $ as well as $ Q $, $ R $, and $ \scheck $ are the input of our subroutines and vary between iterations.
Throughout this paper, we define~$ \hat{Q} \define Q \cup \{ \shat \} $ and~$ \hat{R} \coloneqq R \cup \{ \shat \} $.
Similar to Hoppe and Tardos, we only discuss the case in which $ \scheck $ is a source, since the treatment of sinks is symmetrical and is sketched in \Cref{apx:symmetry}.

\subsubsection*{Capacity-Parametric Instances}

\begin{figure}[t]
    \centering
    \begin{subfigure}[h]{0.45\textwidth}
    \begin{tikzpicture}[>=stealth',shorten >=1pt, shorten <=0.5pt, auto, node distance = 1.5cm]
        \tikzstyle{every state}=[thick, inner sep=0mm, minimum size=5mm]
    
        \node[state] (s)   {$s$};
        \node[state] (scheck) [left=2cm of s] {$\scheck$};
        \node[] (bal) [below=0.1cm of scheck] {$ b(\scheck) $};

        \node (x) [above of=s, xshift=-1.75cm, yshift=-0.75cm] {};
        \path[->, draw] (x) --node{\small$u_a/\tau_a$}  ++(0.9,0) {};

        \path[->] (scheck) edge node[below, font=\small]{$\infty/0$} (s);
    
        \node[] (p1) [right=1cm of s] {$\dots$};
        \node[] (p2) [above=0.7cm of p1] {$\dots$};
        \node[] (p3) [below=0.7cm of p1] {$\dots$};
    
        \path[->] (s) edge (p1);
        \path[->] (p2) edge (s);
        \path[->] (s) edge (p3);
    \end{tikzpicture}
\end{subfigure}%
\hfill
\begin{subfigure}[h]{0.45\textwidth}
    \begin{tikzpicture}[>=stealth',shorten >=1pt, shorten <=0.5pt, auto, node distance = 1.5cm]
        \tikzstyle{every state}=[thick, inner sep=0mm, minimum size=5mm]
    
        \node[state] (s)   {$s$};
        \node[state] (shat) [left=2cm of s, yshift=0.5cm] {$\shat$};
        \node[state] (scheck) [left=2cm of s, yshift=-0.5cm] {$\scheck$};
        \node[] (bal) [below=0.1cm of scheck] {$ b(\scheck) - \Delta^\alpha $};
        \node[] (bal2) [above=0.1cm of shat] {$ \Delta^\alpha $};

        \path[->] (scheck) edge node[below, font=\small]{$\infty/0$} (s);
        \path[->] (shat) edge node[above, font=\small]{$\alpha/0$} (s);
    
        \node[] (p1) [right=1cm of s] {$\dots$};
        \node[] (p2) [above=0.7cm of p1] {$\dots$};
        \node[] (p3) [below=0.7cm of p1] {$\dots$};
    
        \path[->] (s) edge (p1);
        \path[->] (p2) edge (s);
        \path[->] (s) edge (p3);
    \end{tikzpicture}
\end{subfigure}
    \caption{A dynamic transshipment instance $ (\dnet, b, T) $ after initialization with terminal $\scheck $ (left) and the corresponding $ \alpha $-parametric instance $ (\dnet^\alpha, b^\alpha, T) $ for \maxAlpha (right).}
    \label{fig:alpha-viz}
\end{figure}
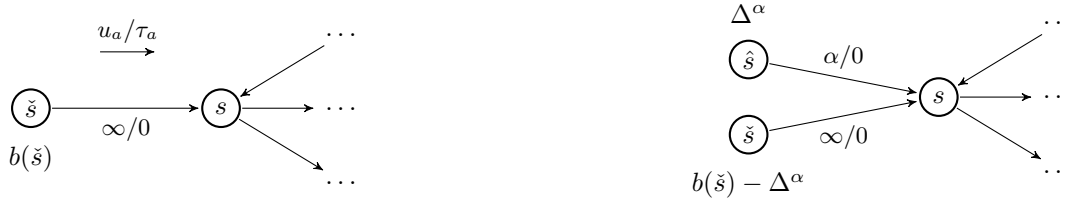

The first subroutine starts with tight sets $ Q $ and $ R $ and a drained source $ \scheck \in R \setminus Q $ for which~$ Q \cup \{ \scheck \} $ is not tight.
Its aim is to reassign as much supply as possible to a new filled source $\hat{s}$.
In doing so, we capacitate the out-flow of $\hat{s}$ such that $\hat{Q}$ is tight and the transformed instance remains feasible.

\begin{definition}[$ \alpha $-Parametric Dynamic Network] \label{def:alpha-parametrized-network}
    Given a dynamic network $ \dnet $, a drained source $ \check{s} \in S^+ $ and a parameter $ \alpha \in \mathbb{N}_0 $, the \emph{$ \alpha $-parametric network} $ \dnet^\alpha $ is constructed by adding a new filled source $ \hat{s} $ and connecting it to $ s $ via an $ \alpha $-capacity, zero-transit arc $ (\hat{s}, s) $.
\end{definition}

Let $ o^\alpha \colon 2^{S\cup\{\shat\}} \to \mathbb{N}_0 $ be the parametric counterpart of the maximum out-flow as in Definition~\ref{def:out-flow} in the parametric network~$ \dnet^\alpha $.
Using this notation, we define \emph{$ \alpha $-parametric dynamic transshipment instances} analogously to Hoppe and Tardos~\cite{hoppe2000quickest}.
The construction of an $ \alpha $-parametric instance is illustrated in \cref{fig:alpha-viz}.

\begin{definition}[$ \alpha $-Parametric Dynamic Transshipment Instance]\label{def:alpha-parametrized-instance}
    Given a feasible dynamic transshipment instance $ (\dnet, b, T) $, two tight sets of terminals $ Q \subset R \subseteq S $, a drained source $ \check{s} \in R \setminus Q $ and a parameter $ \alpha \in \mathbb{N}_0 $, the corresponding \emph{$ \alpha $-parametric dynamic transshipment instance} $ (\dnet^\alpha, b^{\alpha}, T) $ consists of the following components.
    \begin{itemize}
        \item An $ \alpha $-parametric dynamic network $\dnet^\alpha$ as in \cref{def:alpha-parametrized-network}.
        \item An $ \alpha $-parametric balance function $ b^\alpha $ with $ b^\alpha(t) = b(t) $ for all terminals $ t \in S \setminus \{ \check{s} \} $, $ b^\alpha(\hat{s}) = \Delta^\alpha $, and $ b^\alpha(\check{s}) = b(\check{s}) - \Delta^\alpha $, where $ \Delta^\alpha \define o^\alpha(\hat{Q}) - o^\alpha(Q) $.
    \end{itemize}
\end{definition}

We call a parameter value $ \alpha \in \mathbb{N}_0 $ \emph{feasible} if the corresponding $\alpha$-parametric dynamic transshipment instance $(\dnet^\alpha, b^\alpha, T)$ is feasible.
Determining whether a value $ \alpha $ is feasible is equivalent to checking if a \emph{violated} set $ X \subseteq S \cup \{ \shat \} $ exists.
Recall that this can be done by minimizing the parametric submodular function $ v^\alpha(X) = o^\alpha(X) - b^\alpha(X) $.
A subroutine of the algorithm by Hoppe and Tardos finds a maximum feasible parameter value $ \alpha \in \mathbb{N}_0 $.

\begin{definition}[\maxAlpha]
    Given an $\alpha$-parametric dynamic transshipment instance, find the maximum feasible $ \alpha \in \mathbb{N}_0 $.
\end{definition}

We denote the maximum feasible parameter value by $ \alpha^* $.
Hoppe and Tardos concluded that, given a feasibility oracle for $ (\dnet^\alpha, b^\alpha, T) $ taking $ \mathcal{O}(\sfm(k, n, m)) $ time, the value $ \alpha^* $ can be found in $ \mathcal{O}(\log (nU_\text{max}) \cdot \sfm(k, n, m)) $ time, where $ U_\text{max} \define \max_{a \in A(\dnet)} u_a $.
In addition, we can employ \megiddo's parametric search~\cite{megiddo1978combinatorial} to achieve a strongly polynomial runtime of~$ \mathcal{O}(\sfm(k, n, m)^2) $.
We mainly improve upon the strongly polynomial approach.
 
\subsubsection*{Transit-Parametric Instances}

\begin{figure}[t]
    \centering
    \begin{subfigure}[h]{0.45\textwidth}
    \begin{tikzpicture}[>=stealth',shorten >=1pt, shorten <=0.5pt, auto, node distance = 1.5cm]
        \tikzstyle{every state}=[thick, inner sep=0mm, minimum size=5mm]
    
        \node[state] (s)   {$s$};
        \node[state] (scheck) [left=2cm of s] {$\scheck$};
        \node[] (bal) [below=0.1cm of scheck] {$ b(\scheck) $};

        \node (x) [above of=s, xshift=-1.75cm, yshift=-0.75cm] {};
        \path[->, draw] (x) --node{\small$u_a/\tau_a$}  ++(0.9,0) {};

        \path[->] (scheck) edge node[below, font=\small]{$\infty/0$} (s);
    
        \node[] (p1) [right=1cm of s] {$\dots$};
        \node[] (p2) [above=0.7cm of p1] {$\dots$};
        \node[] (p3) [below=0.7cm of p1] {$\dots$};
    
        \path[->] (s) edge (p1);
        \path[->] (p2) edge (s);
        \path[->] (s) edge (p3);
    \end{tikzpicture}
\end{subfigure}%
\hfill
\begin{subfigure}[h]{0.45\textwidth}
    \begin{tikzpicture}[>=stealth',shorten >=1pt, shorten <=0.5pt, auto, node distance = 1.5cm]
        \tikzstyle{every state}=[thick, inner sep=0mm, minimum size=5mm]
    
        \node[state] (s)   {$s$};
        \node[state] (shat) [left=2cm of s, yshift=0.5cm] {$\shat$};
        \node[state] (scheck) [left=2cm of s, yshift=-0.5cm] {$\scheck$};
        \node[] (bal) [below=0.1cm of scheck] {$ b(\scheck) - \Delta^\delta $};
        \node[] (bal2) [above=0.1cm of shat] {$ \Delta^\delta $};

        \path[->] (scheck) edge node[below, font=\small]{$\infty/0$} (s);
        \path[->] (shat) edge node[above, font=\small]{$1/\delta$} (s);
    
        \node[] (p1) [right=1cm of s] {$\dots$};
        \node[] (p2) [above=0.7cm of p1] {$\dots$};
        \node[] (p3) [below=0.7cm of p1] {$\dots$};
    
        \path[->] (s) edge (p1);
        \path[->] (p2) edge (s);
        \path[->] (s) edge (p3);
    \end{tikzpicture}
\end{subfigure}
    \caption{A dynamic transshipment instance $ (\dnet, b, T) $ after initialization with terminal $\scheck $ (left) and the corresponding $ \delta $-parametric instance $ (\dnet^\delta, b^\delta, T) $ for \minDelta (right).}
    \label{fig:delta-viz}
\end{figure}
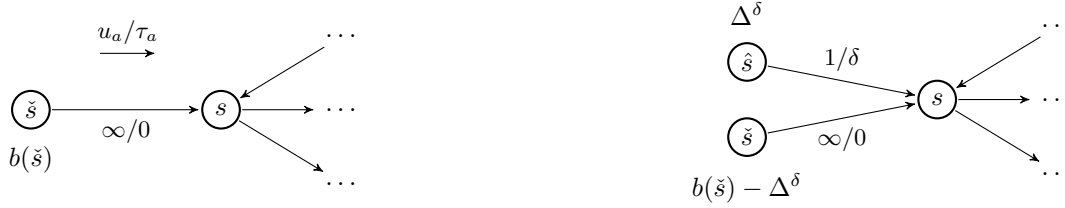

The second subroutine is closely related to \maxAlpha. 
Again, we start with tight sets $ Q $ and $ R $ and a drained source $ \scheck \in R \setminus Q $ for which~$ Q \cup \{ \scheck \} $ is not tight.
The aim is to reassign as much supply as possible to a new filled source $\hat{s}$.
In doing so, we set the transit time for the flow out of $\hat{s}$ such that $\hat{Q}$ is tight and the transformed instance remains feasible.

\begin{definition}[$ \delta $-Parametric Dynamic Network] \label{def:delta-parametrized-network}
    Given a dynamic network $ \dnet $, a drained source $ \check{s} \in S^+ $ and a parameter $ \delta \in \mathbb{N}_0 $, the \emph{$ \delta $-parametric network} $ \dnet^\delta $ is constructed by adding a terminal $ \hat{s} $ and connecting it to $ s $ via a unit-capacity, $ \delta $-transit arc $ (\hat{s}, s) $.
\end{definition}

Analogously to $ o^\alpha $, we denote the maximum out-flow of a subset of terminals $ X \subset S \cup \{ \shat \} $ by~$ o^\delta(X) $.
Next, we combine this parametric variant of our dynamic network with a parametric balance function to form a \emph{$ \delta $-parametric dynamic transshipment instance} (cf. \cref{fig:delta-viz}).

\begin{definition}[$ \delta $-Parametric Dynamic Transshipment Instance] \label{def:delta-parametrized-instance}
    Given a feasible dynamic transshipment instance $ (\dnet, b, T) $, two tight sets of terminals $ Q \subset R \subseteq S $, a drained source $ \scheck \in R \setminus Q $ and a parameter $ \alpha \in \mathbb{N}_0 $, a \emph{$ \delta $-parametric dynamic transshipment instance} $ (\dnet^\delta, b^\delta, T) $ consists of the following components.
    \begin{itemize}
        \item A $ \delta $-parametric dynamic network $\dnet^\delta$ as given in \Cref{def:delta-parametrized-network}.
        \item A $ \delta $-parametric balance function $ b^\delta $ with $ b^\delta(t) = b(t) $ for all terminals $ t \in S \setminus \{ \check{s} \} $, $ b^\delta(\hat{s}) = \Delta^\delta $, and $ b^\delta(\check{s}) = b(\check{s}) - \Delta^\delta $, where $ \Delta^\delta \define o^\delta(\hat{Q}) - o^\delta(Q) $.
    \end{itemize}
\end{definition}

Again, a parameter value $ \delta \in \mathbb{N}_0 $ is \emph{feasible} if the corresponding parametric dynamic transshipment instance $ (\dnet^\delta, b^\delta, T) $ is feasible.
This can be checked by minimizing the parametric submodular function $ v^\delta(X) = o^\delta(X) - b^\delta(X) $.
Again, we call a set $X\subseteq S$ with~$v^\delta(X)<0$ \emph{violated}.
This yields the following parametric search problem.

\begin{definition}[\minDelta]
Given a $\delta$-parametric dynamic transshipment instance, find the minimum feasible $ \delta \in \mathbb{N}_0 $.
\end{definition}

Again, $ \delta^* $ denotes the minimum feasible parameter value.
Hoppe and Tardos showed that the optimal value $ \delta^* $ can be found in $ \mathcal{O}(\log(T) \cdot \sfm(k, n, m)) $ time, or in strongly polynomial time of $ \mathcal{O}(\sfm(k, n, m)^2) $ using the parametric search of \megiddo \cite{megiddo1978combinatorial}.

The algorithm by Hoppe and Tardos performs a total of $ k $ iterations, each of which calls both subroutines \maxAlpha and \minDelta once.
Therefore, the current version of the algorithm by Hoppe and Tardos takes $ \mathcal{O}\left(k \cdot \log (nTU_\text{max}) \cdot \sfm(k, n, m)\right) $ time if both subroutines are implemented with a binary search, while \megiddo's parametric search results in a runtime of $ \mathcal{O}(k \sfm(k, n, m)^2) $.
In the following sections, we improve the runtime of each iteration by introducing better parametric search algorithms for both subroutines.

\section{Restricting the Domains of Violated Sets}\label{sec:restrictions}
Both problems \maxAlpha and \minDelta introduced in the previous section rely on minimizing submodular functions to determine whether a parameter value is feasible.
Even with state-of-the-art algorithms for SFM, this remains a computationally expensive subroutine that scales poorly with the number of terminals in the ground set.

We show that there always exists a violated set $ X \subseteq S $ satisfying $ \hat{Q} \subset X \subset \hat{R} $, which allows us to restrict the ground sets of our parametric submodular functions $ v^\alpha $ and $ v^\delta $.
This provides a practical improvement and establishes the foundation for the following sections.

\begin{lemma} \label{lemma:s_hat-s_check-violator}
    Let $ X \subseteq S \cup \{ \shat \} $ be a violated set for an $ \alpha $-parametric dynamic transshipment instance with respect to a drained source $ \scheck $.
    Then $ \shat \in X $ and $ \scheck \not \in X $.
    The same applies to $ \delta $-parametric dynamic transshipment instances.
\end{lemma}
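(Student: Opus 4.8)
The plan is to reduce everything to the feasibility of the original instance $(\dnet, b, T)$, which by \cref{theorem:feasibility} means $v(Y) = o(Y) - b(Y) \geq 0$ for every $Y \subseteq S$. The whole argument rests on two structural observations about the parametric out-flow, which I state for the $\alpha$-case; the $\delta$-case is identical up to a time shift. Throughout, note that $\shat$ is a source and that $\Delta^\alpha = o^\alpha(\hat{Q}) - o^\alpha(Q) \geq 0$, since passing from $Q$ to $\hat{Q} = Q \cup \{\shat\}$ only adds a source and adding a source never decreases the out-flow (the source side $S^+ \cap X$ grows while the sink side $S^- \setminus X$ is unchanged).

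\emph{Observation 1:} if $\shat \notin X$, then $o^\alpha(X) = o(X)$. Indeed, $\shat$ is a source whose only incident arc is the outgoing arc $(\shat, s)$; when $\shat \notin X$ it is attached neither to the super-source nor to the super-sink, so no flow can ever enter or leave it, and the parametric network behaves exactly like $\dnet$. \emph{Observation 2:} if $\scheck \in X$, then $o^\alpha(X) = o^\alpha(X \setminus \{\shat\})$. The inequality ``$\geq$'' is again monotonicity in the added source $\shat$. For ``$\leq$'' I take an optimal flow over time for $o^\alpha(X)$ and reroute every flow unit traversing $(\shat, s)$ through $(\scheck, s)$ instead; since $\scheck \in X$ is already attached to the super-source and the arc $(\scheck, s)$ has infinite capacity and zero transit, this yields a flow of equal value that does not use $\shat$. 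In the $\delta$-case the rerouted units must be released into $\scheck$ a time $\delta$ later, so that they reach $s$ at the very same moment as before and the arrival pattern at $s$ — hence the entire downstream flow — is left unchanged. This time-shift step, ensuring arrivals at $s$ are preserved so that dynamic flow conservation downstream is not disturbed, is the part I expect to require the most care.

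With these two facts the lemma follows from short balance computations. For $\shat \in X$: suppose instead $\shat \notin X$, so that $X \subseteq S$ and $o^\alpha(X) = o(X)$ by Observation 1. Since $b^\alpha$ agrees with $b$ except that it subtracts $\Delta^\alpha$ at $\scheck$, one gets $v^\alpha(X) = v(X)$ when $\scheck \notin X$ and $v^\alpha(X) = v(X) + \Delta^\alpha$ when $\scheck \in X$; in both cases $v^\alpha(X) \geq 0$ because $v(X) \geq 0$ and $\Delta^\alpha \geq 0$, contradicting that $X$ is violated.

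For $\scheck \notin X$: suppose instead $\scheck \in X$. If $\shat \notin X$, the computation just made already yields $v^\alpha(X) \geq 0$. If $\shat \in X$, then Observation 2 gives $o^\alpha(X) = o^\alpha(X \setminus \{\shat\})$, and since $b^\alpha(\shat) = \Delta^\alpha$ we obtain $v^\alpha(X) = v^\alpha(X \setminus \{\shat\}) - \Delta^\alpha$. The set $X \setminus \{\shat\}$ contains $\scheck$ but not $\shat$, so by the previous paragraph $v^\alpha(X \setminus \{\shat\}) = v(X \setminus \{\shat\}) + \Delta^\alpha \geq \Delta^\alpha$, whence $v^\alpha(X) \geq 0$ — again contradicting that $X$ is violated. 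This proves $\shat \in X$ and $\scheck \notin X$. The $\delta$-case is verbatim the same, with $o^\delta, v^\delta, \Delta^\delta$ replacing $o^\alpha, v^\alpha, \Delta^\alpha$ and Observation 2 invoked in its time-shifted form.
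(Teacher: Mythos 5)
Your proof is correct and follows essentially the same route as the paper's: the same structural observations that flow through the capacitated arc $(\shat,s)$ can be rerouted via the infinite-capacity, zero-transit arc $(\scheck,s)$ whenever $\scheck\in X$, and that $\shat\notin X$ forces $o^\alpha(X)=o(X)$, followed by the same case analysis on membership of $\shat$ and $\scheck$ with the corresponding balance bookkeeping. The only differences are cosmetic — you phrase the rerouting as $o^\alpha(X)=o^\alpha(X\setminus\{\shat\})$ and make the nonnegativity of $\Delta^\alpha$ and the $\delta$-case time shift explicit, which the paper leaves implicit.
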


\begin{proof}
    We only prove the statement for $ \alpha $-parametric instances, as the reasoning is analogous for $ \delta $-parametric instances. 
    Let $ X \subseteq S \cup \{ \hat{s}, \check{s} \} $ be an arbitrary subset of terminals.
    Given that we assume the transshipment instance to be feasible, we have $v(X)\geq 0$.
    Remember that the maximum out-flow $ o^\alpha(X) $ is the value of a maximum flow over time from a super-source $ s^+ $ to a super-sink $ s^- $, where infinite-capacity, zero-transit arcs are used to connect $ s^+ $ to every source in $ S^+ \cap X $ and to connect every sink in $ S^- \setminus X $ to $ s^- $.
    Using this definition, we derive the following observations regarding $ o(X) $:
    \begin{enumerate}[O1]
        \item \label{obs:shat-scheck-1} If $ X $ contains neither $ \shat $ nor $ \scheck $, then maximum out-flow $ o^\alpha(X) $ coincides with the out-flow $ o(X) $ in the original instance.
        \item \label{obs:shat-scheck-2} If $ X $ contains $ \scheck $, then all flow traveling from $ s^+ $ to $ s $ can bypass the $\alpha$-capacity arc $ (\shat, s) $ and move along the infinite-capacity arcs $ (s^+, \scheck) $ and $ (\scheck, s) $ instead. As a consequence, we have $ o^\alpha(X) = o(X) $.
    \end{enumerate}    
    We refer the reader back to \cref{fig:alpha-viz} for a visual intuition.
    Combining these observations with the parametric balances $ b^\alpha(\shat) = \Delta^\alpha $ and $ b^\alpha(\scheck) = b(\scheck) - \Delta^\alpha $, we prove \cref{lemma:s_hat-s_check-violator} by considering the following three complementary cases.
    \begin{enumerate}[(I)]
        \item If $ \hat{s} \in X $ and $ \check{s} \in X $, then it follows that
        \begin{equation*}
            b^\alpha(X) = b^\alpha(X \setminus \{ \shat, \scheck \}) + b^\alpha(\shat) + b^\alpha(\scheck) \overset{\text{Def.~of } b^\alpha}{=} b^\alpha(X \setminus \{ \shat, \scheck \}) + b(\scheck) = b(X \setminus \{ \shat \}) = b(X).
        \end{equation*}
        Together with Observation \ref{obs:shat-scheck-2} we obtain $ v^\alpha(X) = v(X) \geq 0 $, meaning that $ X $ cannot be a violated set.
        \item If $ \hat{s} \not \in X $ and $ \check{s} \in X $, then it follows that $ b^\alpha(X) = b(X) - \Delta^\alpha < b(X) $, which, together with Observation \ref{obs:shat-scheck-2}, implies that $ X $ cannot be a violated set since $ v^\alpha(X) \geq v(X) \geq 0 $.
        \item If $ \hat{s} \not \in X $ and $ \check{s} \not \in X $, then $ b^\alpha(X) = b(X) $ and therefore $ X $ cannot be a violated set as it follows from Observation \ref{obs:shat-scheck-1} that $ v^\alpha(X) = o^\alpha(X) - b^\alpha(X) = o(X) - b(X) = v(X) \geq 0 $.
    \end{enumerate}
    Hence, $ X $ can only be a violated set of $ (\dnet^\alpha, b^\alpha, T) $ if $ \hat{s} \in X $ and $ \check{s} \not \in X $.
\end{proof}

Hoppe and Tardos \cite{hoppe2000quickest} proved the property from \cref{lemma:s_hat-s_check-violator} for the special case of the infeasible parameter value $ \delta^*-1 $ and used it to show that there exists a set $ X $ satisfying $ \hat{Q} \subset X \subset \hat{R} $ which is violated for $ \delta^*-1 $ and tight for $ \delta^* $.
We employ similar arguments to show that this also holds for all infeasible $ \alpha, \delta \in \mathbb{N}_0 $.

\begin{lemma} \label{lemma:alpha-minimizer-within-Q-R}
    Let $ \alpha \in \mathbb{N}_0 $ be an infeasible parameter value for \maxAlpha.
    Then there is a minimizer $ X $ of $ v^\alpha $ with $ \hat{Q} \subset X \subset \hat{R} $.
    Analogously, let $ \delta \in \mathbb{N}_0 $ be an infeasible parameter value for \minDelta.
    Then there is a minimizer $ X $ of $ v^\delta $ with $ \hat{Q} \subset X \subset \hat{R} $.
\end{lemma}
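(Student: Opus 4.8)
The plan is to begin with an arbitrary minimizer of $v^\alpha$ and then ``trim'' it into $\hat{R}$ and ``pad'' it up to $\hat{Q}$, using submodularity together with two enabling facts. Since $\alpha$ is infeasible, every minimizer $X^*$ of $v^\alpha$ has $v^\alpha(X^*) < 0$ and is therefore violated, so \cref{lemma:s_hat-s_check-violator} gives $\shat \in X^*$ and $\scheck \notin X^*$; the contrapositive of that lemma also supplies the key observation that \emph{any set containing $\scheck$ is not violated}, i.e.\ has nonnegative $v^\alpha$-value. Before the main argument I would record that both $\hat{Q}$ and $\hat{R}$ are tight in the parametric instance, i.e.\ $v^\alpha(\hat{Q}) = v^\alpha(\hat{R}) = 0$. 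This is a short cancellation: evaluating $b^\alpha$ via $\scheck \notin Q$ and $\scheck \in R$, evaluating $o^\alpha(Q)$ and $o^\alpha(\hat{R})$ via Observations O1 and O2 from the proof of \cref{lemma:s_hat-s_check-violator}, invoking the tightness $o(Q)=b(Q)$ and $o(R)=b(R)$, and inserting $\Delta^\alpha = o^\alpha(\hat{Q}) - o^\alpha(Q)$, everything reduces to $0$.

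For the upper bound I would set $X_1 \define X^* \cap \hat{R}$. Submodularity of $v^\alpha$ gives $v^\alpha(X_1) + v^\alpha(X^* \cup \hat{R}) \le v^\alpha(X^*) + v^\alpha(\hat{R}) = v^\alpha(X^*)$. Now $X^* \cup \hat{R} \supseteq \hat{R} \ni \scheck$, so this union contains $\scheck$ and is therefore not violated, i.e.\ $v^\alpha(X^* \cup \hat{R}) \ge 0$. Combined with the previous inequality and the minimality of $X^*$, this forces $v^\alpha(X_1) = v^\alpha(X^*)$, so $X_1$ is again a minimizer, now contained in $\hat{R}$ and still satisfying $\shat \in X_1$, $\scheck \notin X_1$.

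For the lower bound I would set $X \define X_1 \cup \hat{Q}$ and argue symmetrically, the set to control now being $X_1 \cap \hat{Q} = (X_1 \cap Q) \cup \{\shat\}$. Writing $Y \define X_1 \cap Q \subseteq S \setminus \{\scheck\}$, the crucial claim is $v^\alpha(Y \cup \{\shat\}) \ge 0$. This is the one place where a naive uncrossing against the merely tight (non-minimizing) set $\hat{Q}$ fails, so instead I would apply submodularity of the \emph{out-flow} function $o^\alpha$ to the sets $Y \cup \{\shat\}$ and $Q$, whose union is $\hat{Q}$ and whose intersection is $Y$, obtaining $o^\alpha(Y \cup \{\shat\}) \ge o^\alpha(\hat{Q}) + o^\alpha(Y) - o^\alpha(Q)$. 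Substituting $o^\alpha(Y) = o(Y)$, $o^\alpha(Q) = b(Q)$ and $\Delta^\alpha = o^\alpha(\hat{Q}) - b(Q)$ makes this rearrange exactly into $v^\alpha(Y \cup \{\shat\}) \ge o(Y) - b(Y) = v(Y) \ge 0$, where the last inequality is feasibility of the original instance. Submodularity of $v^\alpha$ then yields $v^\alpha(X) + v^\alpha(X_1 \cap \hat{Q}) \le v^\alpha(X_1) + v^\alpha(\hat{Q}) = v^\alpha(X_1)$, and since $v^\alpha(X_1 \cap \hat{Q}) \ge 0$ this forces $v^\alpha(X) \le v^\alpha(X_1)$, so $X$ is a minimizer with $\hat{Q} \subseteq X \subseteq \hat{R}$.

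Finally, strictness is immediate: $\scheck \in \hat{R} \setminus X$ gives $X \subsetneq \hat{R}$, while $v^\alpha(X) < 0 = v^\alpha(\hat{Q})$ forces $X \neq \hat{Q}$ and hence $\hat{Q} \subsetneq X$. The $\delta$-case is identical after replacing $o^\alpha, v^\alpha, \Delta^\alpha$ by $o^\delta, v^\delta, \Delta^\delta$, since \cref{lemma:s_hat-s_check-violator}, the two observations, and the tightness computations carry over verbatim. I expect the main obstacle to be precisely the feasibility of $X_1 \cap \hat{Q}$: one must recognise that uncrossing against the tight but non-minimizing sets $\hat{Q}, \hat{R}$ does not preserve minimality on its own, and that the two steps are rescued by genuinely different facts — for the upper bound that $X^* \cup \hat{R}$ contains $\scheck$, and for the lower bound the descent to submodularity of $o^\alpha$ together with feasibility $v(Y) \ge 0$ of the original instance.
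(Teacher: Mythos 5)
Your proof is correct and follows the same uncrossing skeleton as the paper: both arguments first intersect an arbitrary minimizer $X^*$ with $\hat{R}$, using that $X^*\cup\hat{R}\ni\scheck$ cannot be violated by \cref{lemma:s_hat-s_check-violator}, and then pad the result up past $Q$, ultimately producing the very same set $Q\cup(X^*\cap\hat{R})$. The one place you genuinely diverge is the padding step. You uncross $X_1=X^*\cap\hat{R}$ against $\hat{Q}$, whose intersection with $X_1$ contains $\shat$ and is therefore \emph{not} excluded as a violated set by \cref{lemma:s_hat-s_check-violator}; to rescue this you descend to submodularity of $o^\alpha$ applied to $Y\cup\{\shat\}$ and $Q$ (with $Y=X_1\cap Q$) and invoke feasibility $v(Y)\geq 0$ of the original instance. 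That computation is valid --- the cancellation against $\Delta^\alpha=o^\alpha(\hat{Q})-o^\alpha(Q)$ checks out --- but it is more machinery than needed. The paper instead uncrosses against the tight set $Q$ itself: the intersection $X_1\cap Q$ then automatically misses $\shat$, so \cref{lemma:s_hat-s_check-violator} immediately yields $v^\alpha(X_1\cap Q)\geq 0$, and the union $Q\cup X_1\supseteq\hat{Q}$ is a minimizer with no further work. In short, the ``obstacle'' you flag at the lower bound dissolves entirely if one chooses $Q$ rather than $\hat{Q}$ as the uncrossing partner; your workaround is a correct, self-contained alternative, but it buys nothing over the paper's one-line observation.
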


\begin{proof}
    We only prove the statement for \maxAlpha, as the proof for \minDelta is analogous.
    Let $ X^* $ be an arbitrary minimizer of $ v^\alpha $ with $ v^\alpha(X) < 0 $.
    It follows from \cref{lemma:s_hat-s_check-violator} that $ \shat \in X^* $ and $ \scheck \not \in X^* $.
    Next, we show that the set $ X \define Q \cup (X^* \cap \hat{R}) $ is also a minimizer of $ v^\alpha $.
    For this, we analyze the tightness of the sets $ Q $, $ \hat{Q} $, and $ \hat{R} $:
    \begin{itemize}
        \item $ Q $ was chosen to be a tight set for $ (\dnet, b, T) $.
        This also does not change in the parametric instance as $ \shat, \scheck \not \in Q $ and hence $ v^\alpha(Q) = v(Q) = 0$.
        \item $ \hat{Q} $ is tight, since $ b^\alpha(\hat{Q}) = b^\alpha(Q) + \Delta^\alpha \overset{Q \text{ tight}}{=} o^\alpha(Q) +  o^\alpha(\hat{Q}) - o^\alpha(Q) = o^\alpha(\hat{Q}) $.
        \item $ \hat{R} $ is tight because $ R $ is tight and $ \shat, \scheck \in \hat{R} $ directly imply $ o^\alpha(\hat{R}) = o(R) = b(R) = b^\alpha(\hat{R}) $.
    \end{itemize}
    Having shown that $ Q $ and $ \hat{R} $ are tight sets, we study how tight sets and minimizers of~$ v^\alpha $ behave under union and intersection.
    For this purpose, let $ Y \in \{ Q, \hat{R} \} $.
    It follows from submodularity of $ v^\alpha $ that
    \begin{equation} \label{eq:union-intersection-closure}
        v^\alpha(X^* \cup Y) + v^\alpha(X^* \cap Y) \leq v^\alpha(X^*) + v^\alpha(Y) = v^\alpha(X^*) < 0,
    \end{equation}
    and, as a consequence, exactly one of the following properties must hold:
    \begin{enumerate}[(1)]
        \item \label{case:Q-R-tight-1} Both $ X^* \cup Y $ and $ X^* \cap Y $ are violated sets of $ v^\alpha $.
        \item \label{case:Q-R-tight-2} Either $ X^* \cup Y $ or $ X^* \cap Y $ is a minimizer, while the other set is tight.
    \end{enumerate}
    We apply this case distinction to the cases where $ Y = Q $ and $ Y = \hat{R} $:
    \begin{itemize}
        \item Let $ Y = \hat{R} $, then \cref{lemma:s_hat-s_check-violator} states that a violated set cannot contain $ \scheck $, implying that $ v^\alpha(X^* \cup \hat{R}) \geq 0 $ since $ \scheck \in \hat{R} $.
        This means that $ X^* \cap \hat{R} $ is a minimizer of $ v^\alpha $.
        \item Consider $ Y = Q $ and the minimizer $ X^* \cap \hat{R} $. 
        We rule out $ Q \cap (X^* \cap \hat{R}) = Q \cap X^* $ as a violated set since $ \shat \not \in Q \cap X^* $. 
        Therefore, $ Q \cup  (X^* \cap \hat{R}) $ is a minimizer.
    \end{itemize}
    Finally, observe that the minimizer $ X =  Q \cup  (X^* \cap \hat{R}) $ not only satisfies $ \hat{Q} \subseteq X \subseteq \hat{R} $ because~$ \hat{s} \in X^* \cap \hat{R} $ but also~$ \hat{Q} \subset X \subset \hat{R} $ since $ \hat{Q} $ and $ \hat{R} $ are tight.
\end{proof}

\cref{lemma:alpha-minimizer-within-Q-R} allows us to determine feasibility of a parameter value by minimizing the \emph{restricted functions} $ \tilde{v}^\alpha \colon 2^{\hat{R} \setminus \hat{Q}} \to \mathbb{Z} $ and $ \tilde{v}^\delta \colon 2^{\hat{R} \setminus \hat{Q}} \to \mathbb{Z} $ with $ \tilde{v}^\alpha(X) \define v^\alpha(\hat{Q} \cup X) $ and $ \tilde{v}^\delta(X) \define v^\delta(\hat{Q} \cup X) $.
For future use, we define the functions $ \tilde{o}^\alpha $, $ \tilde{o}^\delta $, $ \tilde{b}^\alpha $, and $ \tilde{b}^\delta $ analogously.

This brings us to the main result in this section.

\begin{corollary} \label{corr:faster-feasibility}
    A parameter value $ \alpha \in \mathbb{N}_0 $ is feasible for \maxAlpha if and only if $ \tilde{v}^\alpha(X) \geq 0 $ for every $ X \subseteq \hat{R} \setminus \hat{Q} $. 
    Analogously, a parameter value $ \delta \in \mathbb{N}_0 $ is feasible for \minDelta if and only if $ \tilde{v}^\delta(X) \geq 0 $ for every $ X \subseteq \hat{R} \setminus \hat{Q} $.
\end{corollary}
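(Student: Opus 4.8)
The plan is to read the corollary as a direct translation of \cref{lemma:alpha-minimizer-within-Q-R} through the feasibility criterion of \cref{theorem:feasibility}, so almost all of the work has already been done in the preceding lemmas. Recall that $ \alpha $ is feasible precisely when $ v^\alpha(Y) \geq 0 $ for every $ Y \subseteq S \cup \{ \shat \} $, and that the restricted function is defined by $ \tilde{v}^\alpha(X) = v^\alpha(\hat{Q} \cup X) $ on the ground set $ \hat{R} \setminus \hat{Q} = R \setminus Q $. I would prove the two directions separately, treating only the $ \alpha $-case and remarking that the $ \delta $-case is identical after substituting the corresponding objects.

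For the forward (``only if'') direction I would argue directly from the definitions: if $ \alpha $ is feasible, then $ v^\alpha $ is nonnegative on all of $ 2^{S \cup \{ \shat \}} $ by \cref{theorem:feasibility}. Every $ X \subseteq \hat{R} \setminus \hat{Q} $ yields a set $ \hat{Q} \cup X \subseteq S \cup \{ \shat \} $, whence $ \tilde{v}^\alpha(X) = v^\alpha(\hat{Q} \cup X) \geq 0 $. This step requires nothing beyond the definition of $ \tilde{v}^\alpha $. For the backward (``if'') direction I would pass to the contrapositive: assume $ \alpha $ is infeasible. Then \cref{lemma:alpha-minimizer-within-Q-R} supplies a minimizer $ X^* $ of $ v^\alpha $ with $ \hat{Q} \subset X^* \subset \hat{R} $, and infeasibility forces $ v^\alpha(X^*) < 0 $. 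Setting $ X \define X^* \setminus \hat{Q} \subseteq \hat{R} \setminus \hat{Q} $ and using $ \hat{Q} \subseteq X^* $ to conclude $ \hat{Q} \cup X = X^* $, I obtain $ \tilde{v}^\alpha(X) = v^\alpha(X^*) < 0 $. Thus the restricted function attains a negative value, which is exactly the negation of the right-hand side of the equivalence, and contraposing finishes the $ \alpha $-case.

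I do not expect a genuine obstacle: the substance lives in \cref{lemma:s_hat-s_check-violator,lemma:alpha-minimizer-within-Q-R}, and this corollary merely restates their combined consequence as an equivalence over the restricted ground set $ \hat{R} \setminus \hat{Q} $. The only point worth stating carefully is the set identity $ \hat{Q} \cup (X^* \setminus \hat{Q}) = X^* $, which relies on the inclusion $ \hat{Q} \subseteq X^* $ guaranteed (in fact strictly) by the lemma; everything else is bookkeeping. The $ \delta $-variant follows verbatim by replacing $ v^\alpha, \tilde{v}^\alpha $ with $ v^\delta, \tilde{v}^\delta $ and invoking the $ \delta $-parts of the same two lemmas.
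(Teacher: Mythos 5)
Your proposal is correct and matches the paper's (implicit) argument: the corollary is stated there without a separate proof precisely because it is the immediate consequence of \cref{theorem:feasibility} and \cref{lemma:alpha-minimizer-within-Q-R} that you spell out. Your forward direction from the definition of $\tilde{v}^\alpha$ and your contrapositive backward direction via the minimizer $X^*$ with $\hat{Q}\subset X^*\subset\hat{R}$ are exactly the intended reasoning.
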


The obvious advantage of this result is the reduction of the domain of the submodular functions of interest to $ \hat{R} \setminus \hat{Q} $.
This is further accentuated in iterations of Hoppe and Tardos' algorithm, where $ \hat{R} \setminus \hat{Q} $ only comprises very few elements.
However, these restrictions bring further advantages, which we discuss in more detail in the following section.

\section{Strong Map Sequences}

We concluded the previous section with a useful restriction of our submodular functions $ v^\alpha $ and $ v^\delta $ to sets $ X $ with $ \hat{Q} \subset X \subset \hat{R} $, while maintaining the guarantee that our parametric instance is feasible if and only if the minimizer of our restricted function is not violated.
Building on this, we show that both restricted functions satisfy the \emph{strong map property}.

\begin{definition}[Strong Map Property \cite{schloter2018flows}] \label{def:strong-map-property}
    Let $ f_1, f_2 \colon 2^E \to \mathbb{R} $ be two submodular functions defined over the same finite ground set $ E $. We write $ f_1 \sqsupset f_2 $, or $ f_2 \sqsubset f_1 $, if $ X \subseteq Y \subseteq E $ implies
    \begin{equation*}
        f_1(Y) - f_1(X) \leq f_2(Y) - f_2(X).
    \end{equation*} 
    The relation is called a \emph{strong map}. Submodular functions $ f_1, f_2, \dots, f_k $ form a \emph{strong map sequence} if $ f_1 \sqsupset f_2 \sqsupset \dots \sqsupset f_k $.
\end{definition}

Recall that the minimizers of submodular functions are closed under union and intersection, meaning that there exists a unique minimal and maximal minimizer for every submodular function. 
A result by Topkis \cite{topkis_minimizing_1978} relates the minimal and maximal minimizers of functions that form strong map sequences.

\begin{lemma}[Minimizers for Strong Map Sequences \cite{schloter2018flows}] \label{lemma:strong-map-sequences-minimizers}
    Let $ f_1, f_2 \colon 2^E \to \mathbb{R} $ be two submodular functions over the same ground set $ E $ with $ f_1 \sqsubset f_2 $. Denote by $ X_1^\text{min} $ and $ X_1^\text{max} $ the minimal and maximal minimizer of $ f_1 $, respectively, while $ X_2^\text{min} $ and $ X_2^\text{max} $ are the minimal and maximal minimizer of $ f_2 $, respectively. 
    Then $ X_1^\text{min} \subseteq X_2^\text{min} $ and $ X_1^\text{max} \subseteq X_2^\text{max} $.
\end{lemma}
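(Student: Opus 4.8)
The plan is to prove both inclusions from a single template, handling the minimal minimizers first and then the maximal ones by a mirror argument. Throughout I would use the hypothesis $f_1 \sqsubset f_2$ in its working form, which is exactly $f_2 \sqsupset f_1$ read off \cref{def:strong-map-property}: for all $X \subseteq Y \subseteq E$,
\[
    f_2(Y) - f_2(X) \leq f_1(Y) - f_1(X).
\]
The guiding idea, due to Topkis, is to show that the \emph{cross set} of the two minimizers is again a minimizer of one of the functions, and then let minimality (resp. maximality) collapse the inclusion. Concretely, to get $X_1^{\text{min}} \subseteq X_2^{\text{min}}$ I would prove that $X_1^{\text{min}} \cap X_2^{\text{min}}$ is a minimizer of $f_1$, so that minimality of $X_1^{\text{min}}$ forces $X_1^{\text{min}} \subseteq X_1^{\text{min}} \cap X_2^{\text{min}}$.

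For the minimal case write $A = X_1^{\text{min}}$ and $B = X_2^{\text{min}}$. First I would apply the strong-map inequality to the oriented pair $B \subseteq A \cup B$, obtaining $f_2(A \cup B) - f_2(B) \leq f_1(A \cup B) - f_1(B)$. Since $B$ minimizes $f_2$, the left-hand side is nonnegative, so $f_1(A \cup B) \geq f_1(B)$. Feeding this into the submodularity inequality $f_1(A \cap B) + f_1(A \cup B) \leq f_1(A) + f_1(B)$ yields $f_1(A \cap B) \leq f_1(A)$. As $A$ is a minimizer of $f_1$, the reverse inequality holds as well, so $f_1(A \cap B) = f_1(A)$ and $A \cap B$ is a minimizer of $f_1$. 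Minimality of $A$ then gives $A \subseteq A \cap B \subseteq B$, which is the first claim.

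For the maximal case I would run the symmetric argument with $A = X_1^{\text{max}}$ and $B = X_2^{\text{max}}$. Here I apply the strong-map inequality to $A \cap B \subseteq A$ to get $f_2(A) - f_2(A \cap B) \leq f_1(A) - f_1(A \cap B) \leq 0$, where the last step uses that $A$ minimizes $f_1$. Combining $f_2(A) \leq f_2(A \cap B)$ with the submodularity inequality $f_2(A \cup B) + f_2(A \cap B) \leq f_2(A) + f_2(B)$ gives $f_2(A \cup B) \leq f_2(B)$. Since $B$ minimizes $f_2$, this is an equality, so $A \cup B$ is a minimizer of $f_2$, and maximality of $B$ forces $A \cup B \subseteq B$, i.e. $A \subseteq B$.

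The argument is essentially bookkeeping once the correct cross set is chosen, so I do not expect a conceptual obstacle. The one thing to get right is the orientation of every inequality: which function's submodularity is invoked, which function's (optimal) value is used on each side, and above all applying the strong map to the correctly directed pair — $B \subseteq A \cup B$ in the minimal case and $A \cap B \subseteq A$ in the maximal case. Reversing either orientation makes the chain of inequalities point the wrong way and breaks the proof, so that is where I would concentrate the care.
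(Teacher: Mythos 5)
Your proof is correct: the paper itself does not prove \cref{lemma:strong-map-sequences-minimizers} but imports it from \cite{schloter2018flows} (originally Topkis \cite{topkis_minimizing_1978}), and your argument is precisely the standard one behind that citation — show the cross set ($A \cap B$ resp.\ $A \cup B$) is again a minimizer by combining the correctly oriented strong-map inequality with submodularity, then invoke minimality resp.\ maximality. All orientations in your chains of inequalities check out, so there is nothing to correct.
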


\Cref{lemma:strong-map-sequences-minimizers} already gives an intuition for how the strong map property can be used for the parametric search problems \maxAlpha and \minDelta:
each choice of~$ \alpha $ or~$ \delta $ gives us a new submodular function, and, if these functions form strong map sequences, the strong map property implies that we can only encounter at most $ |S| $ distinct minimal minimizers for a monotonous sequence of parameter values.

Unfortunately, neither $ v^\alpha $ nor $ v^\delta $ satisfy the strong map property as is. 
This changes, however, when considering the restricted functions $ \tilde{v}^\alpha $ and $ \tilde{v}^\delta $ defined in Section~\ref{sec:restrictions}.
Recall that these functions are defined over the ground set $ \hat{R} \setminus \hat{Q} $, and that their definition ensures that~$ \shat $ is implicitly added, while~$ \scheck $ is excluded.

\begin{lemma} \label{lemma:throttle-strong-map}
    Let $ 0 \leq \alpha \leq \alpha' $. Then $ \restr{v}{\alpha} \sqsubset \restr{v}{\alpha'} $.
\end{lemma}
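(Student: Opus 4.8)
The plan is to unfold the definition $\restr{v}{\alpha}(X) = v^\alpha(\hat{Q} \cup X) = o^\alpha(\hat{Q} \cup X) - b^\alpha(\hat{Q} \cup X)$ and to treat the out-flow part and the balance part separately, so that the strong map property collapses to a single submodularity statement about the out-flow function $o^\alpha$. By transitivity of the strong map (the defining inequalities chain), it suffices to prove $\restr{v}{\alpha} \sqsubset \restr{v}{\alpha+1}$ and iterate, so I may as well think of a unit increase in $\alpha$.

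First I would dispose of the balances. Fix $X \subseteq Y \subseteq \hat{R} \setminus \hat{Q}$; since $\scheck$ is excluded from the ground set, neither $X$ nor $Y$ contains $\scheck$. Both $\hat{Q} \cup X$ and $\hat{Q} \cup Y$ then contain $\shat$ and avoid $\scheck$, so the only $\alpha$-dependent balance appearing in either set is $b^\alpha(\shat) = \Delta^\alpha$. Consequently $b^\alpha(\hat{Q} \cup Y) - b^\alpha(\hat{Q} \cup X) = b(Y) - b(X)$ is independent of $\alpha$, the term $\Delta^\alpha$ cancelling in the difference. Hence $\restr{v}{\alpha} \sqsubset \restr{v}{\alpha'}$ reduces, for all such $X \subseteq Y$, to
\[
 o^{\alpha'}(\hat{Q} \cup Y) - o^{\alpha'}(\hat{Q} \cup X) \;\le\; o^{\alpha}(\hat{Q} \cup Y) - o^{\alpha}(\hat{Q} \cup X),
\]
equivalently $o^{\alpha'}(\hat{Q} \cup Y) - o^{\alpha}(\hat{Q} \cup Y) \le o^{\alpha'}(\hat{Q} \cup X) - o^{\alpha}(\hat{Q} \cup X)$: the gain in out-flow from enlarging the capacity of the $(\shat, s)$ arc is smaller for the larger terminal set.

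The key step is a reformulation that turns this ``diminishing returns in $\alpha$'' into ordinary submodularity. I would replace the single capacity-$\alpha$, zero-transit arc $(\shat, s)$ by $\alpha'$ parallel unit-capacity, zero-transit arcs from distinct source copies $\shat^{(1)}, \dots, \shat^{(\alpha')}$ of $\shat$, obtaining a \emph{fixed} network $\dnet'$ whose out-flow function $o'$ is submodular over the terminal set $S \cup \{\shat^{(1)}, \dots, \shat^{(\alpha')}\}$ (here $\shat$ and its arc are removed). Writing $A_\beta = \{\shat^{(1)}, \dots, \shat^{(\beta)}\}$, the claim is that $o^\beta(Z) = o'\big((Z \setminus \{\shat\}) \cup A_\beta\big)$ for every $Z \ni \shat$ and every $\beta \le \alpha'$: the $\beta$ selected unit sources supply exactly capacity $\beta$ into $s$, reproducing the capacity-$\beta$ arc, while the unselected copies, having no incoming flow, are harmless. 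Under this identification, raising $\alpha$ to $\alpha'$ corresponds to adding the elements $A_{\alpha'} \setminus A_\alpha$, and the displayed inequality says exactly that the marginal value of adding $A_{\alpha'} \setminus A_\alpha$ to $(Q \cup X) \cup A_\alpha$ is at least its marginal value when added to the superset $(Q \cup Y) \cup A_\alpha$. This is submodularity of $o'$, namely inequality~\eqref{eq:submodular-function} applied one element at a time (and a single application when $\alpha' = \alpha+1$).

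The main obstacle I anticipate is justifying the reformulation $o^\beta(Z) = o'\big((Z \setminus \{\shat\}) \cup A_\beta\big)$ rigorously, i.e.\ that the bundle of unit source arcs faithfully simulates the single parametric arc for flows over time. This includes the degenerate case where $Z$ also contains $\scheck$: there both sides must collapse to the $\alpha$-independent value $o(Z)$, since the infinite-capacity bypass through $\scheck$ renders the copies redundant, exactly as in Observation~\ref{obs:shat-scheck-2}. The remainder is bookkeeping: confirming that excluding $\scheck$ from the ground set is precisely what forces $\Delta^\alpha$ to cancel in the balance difference, so that no residual $\alpha$-dependence from the balances can spoil the strong map.
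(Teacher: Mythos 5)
Your proof is correct and follows essentially the same route as the paper: the balances are disposed of because $\Delta^\alpha$ cancels in the difference, and the strong map property of the out-flow is reduced to submodularity of an augmented out-flow function in which a capacity increment is encoded as an extra unit-capacity, zero-transit source terminal. The only cosmetic difference is that you introduce $\alpha'$ parallel unit-source copies to cover the whole range at once, whereas the paper adds a single auxiliary unit source $\mathfrak{s}$ to handle a unit increment and then invokes transitivity.
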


\begin{proof}
    Our argument is structured as follows: We first prove that $ \restr{o}{\alpha} \sqsubset \restr{o}{\alpha+1} $ holds for all~$ \alpha \in \mathbb{N}_0 $ and then use this result to prove that $ \tilde{v}^\alpha $ also forms a strong map sequence with~$ \restr{v}{\alpha} \sqsubset \restr{v}{\alpha+1} $. 
    The general claim $ \restr{v}{\alpha} \sqsubset \restr{v}{\alpha'} $ then immediately follows by transitivity of $ \leq $.

    \begin{figure}[hbt]
        \centering
        \resizebox{0.5\textwidth}{!}{
            \begin{tikzpicture}[>=stealth',shorten >=1pt, shorten <=0.5pt, auto, node distance = 1.5cm]
    \tikzstyle{every state}=[thick, inner sep=0mm, minimum size=5mm]

    \node[state] (s)   {$s$};
    \node[state] (shat) [left=2cm of s] {$\shat$};
    \node[state] (sfrac) [above=0.5cm of shat] {$\mathfrak{s}$};
    \node[state, gray] (scheck) [below=0.5cm of shat] {$\scheck$};

    \node[] (p1) [right=1cm of s] {$\dots$};
    \node[] (p2) [above=0.7cm of p1] {$\dots$};
    \node[] (p3) [below=0.7cm of p1] {$\dots$};

    \path[->] (s) edge (p1);
    \path[->] (p2) edge (s);
    \path[->] (s) edge (p3);

    \path[->] (sfrac) edge node[above right, font=\small]{$1/0$} (s);
    \path[->] (shat) edge node[below, font=\small]{$\alpha/0$} (s);
    \path[->, gray] (scheck) edge node[below right, font=\small]{$\infty/0$} (s);

    \node (x) [right=2 of s] {};
    \path[->, draw] (x) --node{\small$u_a/\tau_a$}  ++(0.9,0) {};   
\end{tikzpicture}
        }
        \caption{The auxiliary network $ \Naux $ for \maxAlpha.}
        \label{fig:alpha-aux-network}
    \end{figure}
    
    Let $ \alpha \in \mathbb{N}_0 $ be arbitrary but fixed. 
    We construct an auxiliary network $ \Naux $ with nodes $ V(\Naux) \define V(\dnet^\alpha) \cup \{ \mathfrak{s} \} $ and arcs~$ A(\Naux) \define A(\dnet^\alpha) \cup \{ \mathfrak{a} \define (\mathfrak{s}, s) \} $, where $ s $ is the source that was replaced by $ \scheck $ in the first phase of the algorithm. 
    For the new arc, we set~$ u_\mathfrak{a} = 1 $ and~$ \tau_\mathfrak{a} = 0 $.
    An example network for a given source $ \shat $ is depicted in \cref{fig:alpha-aux-network}.
    
    Let $ \mathfrak{o}(X) $ be the max out-flow function for network $\Naux$ restricted to the domain~$ \hat{R} \setminus \hat{Q} \cup \{ \saux \} $. 
    Notice how adding $ \saux $ to a set of terminals $ X \subseteq \hat{R} \setminus \hat{Q} \cup \{ \saux \} $ with $ \hat{s} \in X $ and $ \check{s} \not \in X $ is equivalent to increasing $ \alpha $ by one, as flow cannot bypass the arcs~$ (\mathfrak{s}, s) $ and~$ (\shat, s) $ through~$ (\scheck, s) $.
    Formally, it holds for every set $ X $ with $ X \subseteq \hat{R} \setminus \hat{Q} $ that
    \begin{enumerate}
        \item \label{item:obs2} $ \mathfrak{o}(X) = \restr{o}{\alpha}(X) $, and
        \item \label{item:obs1} $ \mathfrak{o}(X \cup \{ \saux \}) = \restr{o}{\alpha+1}(X) $.
    \end{enumerate}
    Clearly, the function $ \mathfrak{o} $ is also submodular.
    Hence, given two sets $ X \subseteq Y \subset \hat{R} \setminus \hat{Q} $, the definition of submodularity in \cref{eq:submodular-function} can be restated as
    \begin{equation} \label{eq:submod-modified}
        \mathfrak{o}(Y \cup \{ \mathfrak{s} \}) - \mathfrak{o}(X \cup \{ \mathfrak{s} \}) \leq \mathfrak{o}(Y) - \mathfrak{o}(X).
    \end{equation}
    Combining \cref{eq:submod-modified} with all our previous observations, it follows that the sets~$ X $ and~$ Y $ satisfy
    \begin{equation*}
        \begin{split}
            \restr{o}{\alpha+1}(Y) - \restr{o}{\alpha+1}(X)   \overset{\text{Obs. \ref{item:obs1}}}&{=} \mathfrak{o}(Y \cup \{ \saux \}) - \mathfrak{o}(X \cup \{ \saux \}) \\
            \overset{\text{Eq. \eqref{eq:submod-modified}}}&{\leq} \mathfrak{o}(Y) - \mathfrak{o}(X) \\
            \overset{\text{Obs. \ref{item:obs2}}}&{=} \restr{o}{\alpha}(Y) - \restr{o}{\alpha}(X).
        \end{split}
    \end{equation*}
    Hence, we have shown that $ \restr{o}{\alpha} \sqsubset \restr{o}{\alpha+1} $ holds for every parameter value $ \alpha $.
    Finally, recall our definition of $ \tilde{v}^{\alpha} $ as $ \tilde{v}^{\alpha}(X) = \tilde{o}^{\alpha}(X) - \tilde{b}^{\alpha}(X) $ for any $ X \subseteq \hat{R} \setminus \hat{Q} $.
    Due to the strong map property of $ \tilde{o}^\alpha $, we get for all $ X \subseteq Y \subseteq \hat{R} \setminus \hat{Q} $ that
    \begin{equation*}
        \begin{split}
            \restr{v}{\alpha+1}(Y) - \restr{v}{\alpha}(Y) &= \restr{o}{\alpha+1}(Y) - \restr{o}{\alpha}(Y) + \tilde{b}^{\alpha}(Y) - \tilde{b}^{\alpha+1}(Y) \\
            &= \restr{o}{\alpha+1}(Y) - \restr{o}{\alpha}(Y) + \Delta^\alpha - \Delta^{\alpha+1} \\
            &\leq \tilde{o}^{\alpha+1}(X) - \tilde{o}^\alpha(X) + \Delta^\alpha - \Delta^{\alpha+1} \\
            &= \restr{o}{\alpha+1}(X) - \restr{o}{\alpha}(X) + \tilde{b}^{\alpha}(X) - \tilde{b}^{\alpha+1}(X) \\
            &= \restr{v}{\alpha+1}(X) - \restr{v}{\alpha}(X), \\            
        \end{split}
    \end{equation*}
    therefore proving that $ \tilde{v}^\alpha \sqsubset \tilde{v}^{\alpha+1} $.
    Thus, by transitivity of $ \leq $, our claim $ \tilde{v}^\alpha \sqsubset \tilde{v}^{\alpha'} $ holds. 
\end{proof}

Next, we utilize a similar argument to prove that $ \tilde{v}^\delta $ forms a strong map sequence. 
Here, we rely on our specific definition of the $ \delta $-parametric network $ \dnet^\delta $.

\begin{lemma} \label{lemma:delay-strong-map}
    Let $ 0 \leq \delta' \leq \delta $. Then $ \restr{v}{\delta} \sqsubset \restr{v}{\delta'} $.
\end{lemma}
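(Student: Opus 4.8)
The plan is to mirror the proof of \cref{lemma:throttle-strong-map} for \maxAlpha. By transitivity of $\leq$ it suffices to establish the single-step relation $\restr{v}{\delta+1} \sqsubset \restr{v}{\delta}$ for every $\delta \in \mathbb{N}_0$; the general claim for $\delta' \leq \delta$ then follows by chaining. As before, I would first prove the corresponding statement for the restricted out-flow functions, namely $\restr{o}{\delta+1} \sqsubset \restr{o}{\delta}$, and afterwards lift it to $\restr{v}{}$ using that the only $\delta$-dependent part of the balance $\tilde{b}^\delta$ is the term $\Delta^\delta$ attached to $\shat$, so that the balance contributions cancel exactly as in the $\alpha$-case.

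The heart of the argument is an auxiliary network $\Naux$ in which adding a single auxiliary terminal $\saux$ to a set corresponds to \emph{decreasing} the transit out of $\shat$ by one unit. The main obstacle, compared with \maxAlpha, is that transit times do not compose additively under parallel arcs the way capacities do: switching on an extra unit-capacity parallel arc raises the capacity, whereas here the only desired effect is a one-unit ``head start''. I would therefore build $\Naux$ by replacing the parametric arc $(\shat, s)$ with a \emph{series} of two arcs routed through $\saux$: an arc $(\shat, \saux)$ of capacity $1$ and transit $1$, followed by an arc $\aaux \define (\saux, s)$ of capacity $1$ and transit $\delta$. Setting $V(\Naux) \define V(\dnet^\delta) \cup \{ \saux \}$ and $A(\Naux) \define A(\dnet) \cup \{ (\shat, \saux), \aaux \}$, let $\mathfrak{o}$ denote the max out-flow function on $\Naux$ restricted to $\hat{R} \setminus \hat{Q} \cup \{ \saux \}$.

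The two observations I would prove are, for every $X \subseteq \hat{R} \setminus \hat{Q}$,
\begin{equation*}
    \mathfrak{o}(X) = \restr{o}{\delta+1}(X) \qquad \text{and} \qquad \mathfrak{o}(X \cup \{ \saux \}) = \restr{o}{\delta}(X).
\end{equation*}
For the first identity, when $\saux \not\in X$ the only way to feed flow into the gadget is through the source $\shat \in \hat{Q}$, so flow travels $\shat \to \saux \to s$ with total transit $1 + \delta = \delta + 1$ through the capacity-$1$ bottleneck $\aaux$; this reproduces exactly the arc $(\shat, s)$ of $\dnet^{\delta+1}$. For the second, adding $\saux$ to $X$ links it to the super-source by an infinite-capacity, zero-transit arc, so flow is available at $\saux$ from time $0$ and $\aaux$ delivers flow to $s$ throughout $[\delta, T)$; since $\aaux$ has capacity $1$, the now-redundant $\shat$-path carries no additional flow and the gadget behaves exactly like the arc $(\shat, s)$ of $\dnet^\delta$. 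Here I rely on the model without intermediate storage, so that the earliest arrival at $s$ is governed solely by the transit of the path actually used, and that $\shat \in \hat{Q}$ is always a source in the restricted functions.

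With these identities in hand, submodularity of $\mathfrak{o}$ applied to $\saux$ in the diminishing-returns form \eqref{eq:submodular-function} gives, for $X \subseteq Y \subseteq \hat{R} \setminus \hat{Q}$,
\begin{equation*}
    \restr{o}{\delta}(Y) - \restr{o}{\delta+1}(Y) = \mathfrak{o}(Y \cup \{ \saux \}) - \mathfrak{o}(Y) \leq \mathfrak{o}(X \cup \{ \saux \}) - \mathfrak{o}(X) = \restr{o}{\delta}(X) - \restr{o}{\delta+1}(X),
\end{equation*}
which rearranges to $\restr{o}{\delta}(Y) - \restr{o}{\delta}(X) \leq \restr{o}{\delta+1}(Y) - \restr{o}{\delta+1}(X)$, i.e.\ $\restr{o}{\delta+1} \sqsubset \restr{o}{\delta}$. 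To conclude, I would lift this to $\restr{v}{}$ exactly as in the chain of (in)equalities in \cref{lemma:throttle-strong-map}: since $\restr{v}{\delta}(X) = \restr{o}{\delta}(X) - \tilde{b}^\delta(X)$ and, with $\scheck$ excluded from the ground set, the balance difference $\tilde{b}^\delta(X) - \tilde{b}^{\delta+1}(X) = \Delta^\delta - \Delta^{\delta+1}$ is a constant independent of $X$ (the contribution of $\shat$), it does not affect marginal differences. Hence $\restr{o}{\delta+1} \sqsubset \restr{o}{\delta}$ yields $\restr{v}{\delta+1} \sqsubset \restr{v}{\delta}$, and transitivity of $\leq$ then gives $\restr{v}{\delta} \sqsubset \restr{v}{\delta'}$ for all $\delta' \leq \delta$.
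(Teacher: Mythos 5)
Your proposal is correct and follows essentially the same route as the paper: the paper likewise replaces the parametric arc $(\shat,s)$ by a series gadget $\shat \to \saux \to s$ of unit capacities (with transits $1$ and $\delta-1$, so that the base network models $\dnet^{\delta}$ and adding $\saux$ models $\dnet^{\delta-1}$, a mere reindexing of your transits $1$ and $\delta$), applies the diminishing-returns form of submodularity to $\saux$, and lifts the result from $\restr{o}{}$ to $\restr{v}{}$ via the constant balance shift $\Delta^{\delta}-\Delta^{\delta+1}$ exactly as you do.
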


\begin{proof} 
    The general approach is identical to the proof of Theorem \ref{lemma:throttle-strong-map}.
    That is, we show that $ \tilde{o}^\delta $ forms a strong map sequence with $ \tilde{o}^{\delta-1} \sqsupset \tilde{o}^{\delta} $ for $ \delta \geq 1 $.
    Afterwards, we transfer this result to $ \tilde{v}^{\delta-1} $ and $ \tilde{v}^\delta $.
    The general claim then directly follows from the transitivity of $ \leq $.

    \begin{figure}[hbt]
        \centering
        \resizebox{0.7\textwidth}{!}{
            \begin{tikzpicture}[>=stealth',shorten >=1pt, shorten <=0.5pt, auto, node distance = 1.5cm]
    \tikzstyle{every state}=[thick, inner sep=0mm, minimum size=5mm]

    \node[state] (s)   {$s$};
    \node[state] (sfrac) [above left of=s, xshift=-1.5cm] {$\mathfrak{s}$};
    \node[state] (shat) [left=2cm of sfrac] {$\shat$};
    \node[state, gray] (scheck) [below left of=s, xshift=-1.5cm] {$\scheck$};

    \node[] (p1) [right=1cm of s] {$\dots$};
    \node[] (p2) [above=0.7cm of p1] {$\dots$};
    \node[] (p3) [below=0.7cm of p1] {$\dots$};

    \path[->] (s) edge (p1);
    \path[->] (p2) edge (s);
    \path[->] (s) edge (p3);

    \path[->] (shat) edge node[above, font=\small]{$1/1$} (sfrac);
    \path[->] (sfrac) edge node[above, font=\small]{$1/\delta-1$} (s);
    \path[->, gray] (scheck) edge node[below, font=\small]{$\infty/0$} (s);

    \node (x) [right=2.5cm of s] {};
    \path[->, draw] (x) --node{\small$u_a/\tau_a$}  ++(0.9,0) {};
    
\end{tikzpicture}
        }
        \caption{The auxiliary network $ \Naux $ for \minDelta.}
        \label{fig:delta-aux-network}
    \end{figure}
    
    Again, let $ \delta \in \mathbb{N} $ be arbitrary but fixed. 
    We construct an auxiliary network $ \mathfrak{N} $ with nodes~$ V(\Naux) \define V(\dnet^\delta) \cup \{ \saux \} $ and arcs~$ A(\Naux) \define (A(\dnet^\delta) \setminus \{(\shat, s)\}) \cup \{ \mathfrak{a} \define (\saux, s), \hat{a} \define (\shat, \saux) \} $, $ s $ is the source that was replaced by $ \scheck $ in the first step of the algorithm.
    Additionally, let $ u_{\hat{a}} = u_{\aaux }= 1 $, $ \tau_{\aaux} = \delta - 1 $ and $ \tau_{\hat{a}} = 1 $.
    An example for a source $ \shat $ is given in \cref{fig:delta-aux-network}.
    
    Again, $ \mathfrak{o}(X) $ denotes the maximum out-flow function for this network restricted to the domain $ \hat{R} \setminus \hat{Q} \cup \{ \saux \} $.
    In this construction, we have replaced the arc $ (\hat{s}, s) $ by a path consisting of arcs $ (\shat, \saux) $ and $ (\saux, s) $ with a combined transit time of $ \delta $.
    As a consequence, the maximum out-flow $ \mathfrak{o}(X) $ remains unchanged for sets $ X \subseteq \hat{R} \setminus \hat{Q} $.
    If $ \saux \in X $ and $ \shat \in X $, on the other hand, both sources compete for access to the arc $ (\saux, s) $.
    Then, the flow out of $ X $ is maximized by sending all flow from $ \saux $.
    Formally, it holds for every set $ X \subseteq \hat{R} \setminus \hat{Q} $ that 
    \begin{enumerate}
        \item \label{item:obs2} $ \mathfrak{o}(X) = \restr{o}{\delta}(X) $, and
        \item \label{item:obs1} $ \mathfrak{o}(X \cup \{ \saux \}) = \restr{o}{\delta-1}(X) $
    \end{enumerate}
    These observations plus the second definition of submodularity from \cref{eq:submodular-function} imply for all sets $ X \subseteq Y \subseteq \hat{R} \setminus \hat{Q} $ that 
    \begin{equation*}
        \begin{split}
            \restr{o}{\delta-1}(Y) - \restr{o}{\delta-1}(X) \overset{\text{Obs. \ref{item:obs1}}}&{=} \mathfrak{o}(Y \cup \{ \saux \}) - \mathfrak{o}(X \cup \{ \saux \}) \\
            \overset{\text{Eq.~\eqref{eq:submodular-function}}}&{\leq} \mathfrak{o}(Y) - \mathfrak{o}(X) \\
            \overset{\text{Obs. \ref{item:obs2}}}&{=} \restr{o}{\delta}(Y) - \restr{o}{\delta}(X).
        \end{split}
    \end{equation*}
    Therefore, the function $ \tilde{o}^\delta $ forms a strong map sequence with $ \tilde{o}^{\delta-1} \sqsupset \tilde{o}^{\delta} $.
    Finally, by the same arguments as in the proof of \cref{lemma:throttle-strong-map}, the parametric function $ \tilde{v}^\delta $ forms a strong map sequence with $ \tilde{v}^\delta \sqsupset \tilde{v}^{\delta'} $ as claimed.
\end{proof}

In the following section, we use the strong map property and the resulting nested sequences of minimizers to construct new parametric search algorithms for $ v^\alpha $ and $ v^\delta $ . 

\section{An Improved Parametric Search} \label{sec:algorithm}
In this section, we adapt Schlöter's \cite{schloter2018flows} parametric search algorithm for the single-source or single-sink quickest transshipment problem to \maxAlpha and \minDelta.
We start by showing that $ \tilde{v}^\alpha $ and $ \tilde{v}^\delta $ are monotonic in their respective parameters. 

\begin{lemma} \label{lemma:throttle-delay-monotonic}
    Given a set of terminals $ X \subseteq \hat{R} \setminus \hat{Q} $, both maps $ \alpha \mapsto \tilde{v}^\alpha(X) $ and $ \delta \mapsto \tilde{v}^\delta(X) $ are monotonic in the parameters $ \alpha $ and $ \delta $, respectively.
    That is, we have $ \tilde{v}^\alpha(X) \geq \tilde{v}^{\alpha+1}(X) $ and $ \tilde{v}^\delta(X) \leq \tilde{v}^{\delta+1}(X) $ for all $ \alpha, \delta \in \mathbb{N}_0 $.
\end{lemma}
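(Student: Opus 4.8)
The plan is to prove only the single-step inequalities stated in the lemma, namely $\tilde{v}^\alpha(X) \geq \tilde{v}^{\alpha+1}(X)$ and $\tilde{v}^\delta(X) \leq \tilde{v}^{\delta+1}(X)$, since these are exactly what the statement asks for and no transitivity is required. I would treat the $\alpha$-case in detail and remark that the $\delta$-case is symmetric. Writing $\tilde{v}^\alpha(X) = \tilde{o}^\alpha(X) - \tilde{b}^\alpha(X)$, the goal becomes showing $\tilde{o}^{\alpha+1}(X) - \tilde{o}^\alpha(X) \leq \tilde{b}^{\alpha+1}(X) - \tilde{b}^\alpha(X)$. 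The first thing I would record is that the ground set $\hat{R} \setminus \hat{Q}$ equals $R \setminus Q$ and therefore contains $\scheck$, so a given $X$ may or may not contain the drained source; this distinction drives the entire argument.

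Next I would compute the balance increment. Since $\hat{Q} \cup X$ is the disjoint union of $Q$, $\{\shat\}$, and $X$, and since $b^\alpha(\shat) = \Delta^\alpha$, $b^\alpha(\scheck) = b(\scheck) - \Delta^\alpha$ with all other balances unchanged, I obtain $\tilde{b}^\alpha(X) = b(Q) + \Delta^\alpha + b^\alpha(X)$. The key observation is a cancellation: if $\scheck \in X$, then $b^\alpha(X)$ carries the term $-\Delta^\alpha$, which cancels the $+\Delta^\alpha$ contributed by $\shat$, so $\tilde{b}^\alpha(X)$ does not depend on $\alpha$; whereas if $\scheck \notin X$, then $\tilde{b}^{\alpha+1}(X) - \tilde{b}^\alpha(X) = \Delta^{\alpha+1} - \Delta^\alpha$. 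In the case $\scheck \in X$ I would invoke Observation O2 from the proof of \cref{lemma:s_hat-s_check-violator}: because $\scheck \in \hat{Q} \cup X$, all flow bypasses the parametric arc, so $\tilde{o}^\alpha(X)$ is likewise independent of $\alpha$. Both increments then vanish, $\tilde{v}^\alpha(X)$ is constant, and monotonicity is trivial.

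The substantive case is $\scheck \notin X$, where I must establish $\tilde{o}^{\alpha+1}(X) - \tilde{o}^\alpha(X) \leq \Delta^{\alpha+1} - \Delta^\alpha$. Here I would rewrite the right-hand side: since $Q$ contains neither $\shat$ nor $\scheck$, the value $o^\alpha(Q)$ is independent of $\alpha$, so $\Delta^{\alpha+1} - \Delta^\alpha = o^{\alpha+1}(\hat{Q}) - o^\alpha(\hat{Q}) = \tilde{o}^{\alpha+1}(\emptyset) - \tilde{o}^\alpha(\emptyset)$. The required inequality then reads $\tilde{o}^{\alpha+1}(X) - \tilde{o}^\alpha(X) \leq \tilde{o}^{\alpha+1}(\emptyset) - \tilde{o}^\alpha(\emptyset)$, which is exactly the strong map property $\tilde{o}^\alpha \sqsubset \tilde{o}^{\alpha+1}$ from \cref{lemma:throttle-strong-map} applied to the nested pair $\emptyset \subseteq X$. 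This closes the $\alpha$-case. For $\delta$ the same bookkeeping applies, with $\tilde{b}^\delta(X) = b(Q) + \Delta^\delta + b^\delta(X)$ and the reversed strong map $\tilde{o}^{\delta+1} \sqsubset \tilde{o}^\delta$ from \cref{lemma:delay-strong-map} applied to $\emptyset \subseteq X$, yielding $\tilde{v}^{\delta+1}(X) \geq \tilde{v}^\delta(X)$.

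I expect the main obstacle to be organizing the case distinction cleanly rather than any deep new argument: one must notice that $\scheck$ lies in the restricted ground set, track the $\pm \Delta^\alpha$ cancellation in the balances, and, crucially, recognize that in the non-cancelling case the balance increment is precisely the out-flow increment evaluated at $\emptyset$. This is what makes the whole statement collapse onto the already-established strong map inequality for the nested pair $\emptyset \subseteq X$, so that no fresh flow-over-time reasoning is needed beyond \cref{lemma:throttle-strong-map} and \cref{lemma:delay-strong-map}.
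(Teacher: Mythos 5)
Your proof is correct and follows essentially the same route as the paper: both arguments reduce the claim to the strong map inequality $\tilde{o}^{\alpha+1}(X) - \tilde{o}^{\alpha+1}(\emptyset) \leq \tilde{o}^{\alpha}(X) - \tilde{o}^{\alpha}(\emptyset)$ from \cref{lemma:throttle-strong-map} (resp.\ \cref{lemma:delay-strong-map}) applied to the nested pair $\emptyset \subseteq X$, after observing that the balance increment $\tilde{b}^{\alpha+1}(X) - \tilde{b}^{\alpha}(X)$ equals $\Delta^{\alpha+1} - \Delta^{\alpha} = \tilde{o}^{\alpha+1}(\emptyset) - \tilde{o}^{\alpha}(\emptyset)$. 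The one difference is that you explicitly split off the case $\scheck \in X$, where the $\pm\Delta^{\alpha}$ contributions cancel and $\tilde{v}^{\alpha}(X)$ is constant; the paper's decomposition $\tilde{b}^{\alpha}(X) = b(Z \setminus \{ \shat \}) + \Delta^{\alpha}$ tacitly assumes $\scheck \notin X$ even though $\scheck$ lies in the ground set $\hat{R} \setminus \hat{Q}$, so your case distinction is a correct and strictly speaking necessary refinement of the same argument rather than a different one.
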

\begin{proof}
    Let $ \alpha, \delta \in \mathbb{N}_0 $ be arbitrary but fixed parameter values.  
    We set $Z \coloneqq X \cup\hat{Q}$ and decompose $ \tilde{v}^\alpha(X) $ (and, analogously, $ \tilde{v}^\delta(X) $) as
    \begin{align}\label{eq:alpha-decomposition}
        \tilde{v}^\alpha(X) &= \tilde{o}^\alpha(X) - \tilde{b}^\alpha(X) \\\nonumber
                            &= \tilde{o}^\alpha(X) - \big(b(Z \setminus \{ \shat \}) + \Delta^\alpha\big) \\\nonumber
                            \overset{\text{Def.~\ref{def:alpha-parametrized-instance}}}&{=} \tilde{o}^\alpha(X) - \big(b(Z \setminus \{ \shat \}) + o^\alpha(\hat{Q}) - o^\alpha(Q)\big) \\\nonumber
                            &= \tilde{o}^\alpha(X) - \big(b(Z \setminus \{ \shat \}) + \tilde{o}^\alpha(\emptyset) - o^\alpha(Q)\big) \\\nonumber
                            &= \tilde{o}^\alpha(X) - \tilde{o}^\alpha(\emptyset) + \big(o^\alpha(Q) - b(Z \setminus \{\shat \})\big).\nonumber
    \end{align}
    Observe that we can treat the term $ o^\alpha(Q) - b(Z \setminus \{\shat \}) $ as constant: since $ \shat, \scheck \not \in Q $, the out-flow $ o^\alpha(Q) $ is identical for all $ \alpha \in \mathbb{N}_0 $ (or $ \delta \in \mathbb{N}_0 $).
    Similarly, the value of~$ b(Z \setminus \{ \shat \}) $ does not depend on $ \alpha $ or $ \delta$.
    
    We now prove monotonicity using this simplification.    
    For \maxAlpha, the monotonicity $ \tilde{v}^\alpha(X) \geq \tilde{v}^{\alpha+1}(X) $ holds if and only if~$ \tilde{o}^{\alpha}(X) - \tilde{o}^\alpha(\emptyset) \geq \tilde{o}^{\alpha+1}(X) - \tilde{o}^{\alpha+1}(\emptyset) $ is true, which follows directly from the strong map property $ \tilde{o}^\alpha \sqsubset \tilde{o}^{\alpha+1} $ shown in the proof of \cref{lemma:throttle-strong-map}.
    Similarly, replacing $ \alpha $ with $ \delta $ in \cref{eq:alpha-decomposition}, it follows from $ \tilde{o}^\delta \sqsupset \tilde{o}^{\delta+1} $ that $ \tilde{o}^{\delta}(X) - \tilde{o}^\delta(\emptyset) \geq \tilde{o}^{\delta+1}(X) - \tilde{o}^{\delta+1}(\emptyset) $ and thus $ \tilde{v}^\delta(X) \leq \tilde{v}^{\delta+1}(X) $ holds.
\end{proof}

The monotonicity and strong map property of $ \tilde{v}^\alpha $ and $ \tilde{v}^\delta $ allow us to introduce new parametric search algorithms for \maxAlpha~(cf.~Algorithm~\ref{alg:alpha-parametric-search}) and \minDelta~(cf.~Algorithm~\ref{alg:delta-parametric-search}).
In their core, they follow a rather simple approach similar to algorithms by Schlöter, Skutella and Tran~\cite{schloter2022faster, schloter2018flows}:~we start with an infeasible parameter value $\alpha_1$ or $\delta_1$ and a corresponding minimizer $ X_1 $ of $ \tilde{v}^{\alpha_1} $ or $\tilde{v}^{\delta_1}$.
Possible initial values are~$ \alpha_1 \coloneqq \alpha_\text{max} = nU_\text{max}$ and~$ \delta_1 \coloneqq 0 $ \cite{hoppe2000quickest}.
Next, we alternate between two steps \emph{jump} and \emph{check} until the current minimizer~$ X_i $ is no longer violated.
In the \emph{jump} step, we compute the largest parameter value~$ \alpha_{i+1} $ or the smallest parameter value $ \delta_{i+1} $ such that the previous minimizer~$ X_i $ is no longer violated.
Afterwards, the \emph{check} step finds a minimizer~$ X_{i+1} $ for the new value~$ \alpha_{i+1} $ and~$ \delta_{i+1} $.

\begin{figure}[b]
    \centering
    \begin{algorithm}[H]
    \label{alg:alpha-parametric-search}
        \SetAlgoLined
        \caption{Parametric Search for \maxAlpha}
        
        \KwData{Tight sets $ \hat{Q} \subset \hat{R} \subseteq S \cup \{ \shat \}$, submodular function $ \tilde{v}^\alpha \colon 2^{\hat{R} \setminus \hat{Q}} \to \mathbb{Z} $, infeasible upper bound $ \alpha_\text{max}$}
        \KwResult{Maximum feasible $ \alpha \in \mathbb{N}_0 $}            
        $ \alpha_1 \gets \alpha_\text{max} $, 
        $ X_1 \gets $ Minimizer of $ \tilde{v}^{\alpha_1} $  \\
        $ i \gets 1 $ \\
        \While{$ \tilde{v}^{\alpha_i}(X_i) < 0 $}{
            $ \alpha_{i+1} \gets $ Maximum $ \alpha \geq 0 $ with $ \tilde{v}^{\alpha}(X_i) \geq 0 $ \Comment{Jump step} \\
            $ X_{i+1} \gets $ Minimizer $ X \subset X_i $ of $ \tilde{v}^{\alpha_{i+1}} $ \Comment{Check step} \\
            $ i \gets i + 1 $ \\
        }
        \Return{$ \alpha_i $}
    \end{algorithm}
\end{figure}

\begin{figure}[t]
    \centering
    \begin{algorithm}[H]
        \label{alg:delta-parametric-search}
        \SetAlgoLined
        \caption{Parametric Search for \minDelta}
        
        \KwData{Tight sets $ \hat{Q} \subset \hat{R} \subseteq S \cup \{ \shat \} $, submodular function $ \tilde{v}^\delta \colon 2^{\hat{R} \setminus \hat{Q}} \to \mathbb{Z} $}
        \KwResult{Minimum feasible $ \delta \in \mathbb{N} $}            
        $ \delta_1 \gets 0 $,
        $ X_1 \gets $ Minimizer of $ \tilde{v}^{\delta_1} $  \\
        $ i \gets 1 $ \\
        \While{$ \tilde{v}^{\delta_i}(X_i) < 0 $}{
            $ \delta_{i+1} \gets $ Minimum $ \delta \geq 0 $ with $ \tilde{v}^{\delta}(X_i) \geq 0 $ \Comment{Jump step} \\
            $ X_{i+1} \gets $ Minimizer $ X \subset X_i $ of $ \tilde{v}^{\delta_{i+1}} $ \Comment{Check step} \\
            $ i \gets i + 1 $\\
        }
        \Return{$ \delta_i $}
    \end{algorithm}
\end{figure}

In the check step, we use the results of the previous sections in order to restrict the search for minimizers to $ X_{i+1} \subset X_i $.
We will show in the following that this restriction is actually feasible.
Note that the restriction not only reduces the search space in each iteration to $ |X_i| $ terminals, but also limits \maxAlpha and \minDelta to at most $ |S| $ iterations, since the length of the chain $ X_1 \supset X_2 \supset \dots $ is limited by $ |\hat{R} \setminus \hat{Q}| \leq |S| $.

\begin{theorem} \label{theorem:correctness}
    Algorithm~\ref{alg:alpha-parametric-search} computes the maximum feasible $ \alpha^* \in \mathbb{N}_0 $ for \maxAlpha;
    Algorithm~\ref{alg:delta-parametric-search} computes the minimum feasible $\delta^* \in \mathbb{N}_0$ for \minDelta.
    Both algorithms terminate after at most $ |S| $ iterations of the while loop.
\end{theorem}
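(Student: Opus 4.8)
The plan is to prove the claim for Algorithm~\ref{alg:alpha-parametric-search}; the argument for Algorithm~\ref{alg:delta-parametric-search} is identical up to the reversals recorded in \cref{lemma:delay-strong-map} and \cref{lemma:throttle-delay-monotonic}. I would carry the following three loop invariants through the while loop: (i) $X_i$ is a global minimizer of $\tilde{v}^{\alpha_i}$ over $2^{\hat{R} \setminus \hat{Q}}$; (ii) $\alpha_i \geq \alpha^*$; and (iii) the iterates form a strictly decreasing chain $X_1 \supsetneq X_2 \supsetneq \cdots$. Invariant (i) holds at $i=1$ by initialisation, and (ii) holds at $i=1$ because $\alpha_{\max}$ is an infeasible upper bound and hence exceeds $\alpha^*$. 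It then remains to show both are preserved by one pass of the loop; (ii) and (iii) deliver correctness and the iteration bound.

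First I would analyse the jump step. By \cref{lemma:throttle-delay-monotonic} the map $\alpha \mapsto \tilde{v}^{\alpha}(X_i)$ is non-increasing, so the set $\{\alpha \in \mathbb{N}_0 : \tilde{v}^{\alpha}(X_i) \geq 0\}$ is a prefix $\{0,\dots,\alpha_{i+1}\}$. It is nonempty because $\alpha = 0$ reproduces the original feasible instance (the capacity-$0$ arc $(\shat,s)$ carries no flow, so $\Delta^{0} = 0$), and it is bounded above by $\alpha_i - 1$ since $X_i$ is violated at $\alpha_i$; hence $\alpha_{i+1}$ is well defined and satisfies $\alpha_{i+1} < \alpha_i$. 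Because $\alpha^*$ is feasible, \cref{corr:faster-feasibility} yields $\tilde{v}^{\alpha^*}(X_i) \geq 0$, whence $\alpha^* \leq \alpha_{i+1}$, preserving invariant (ii).

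The crux is to verify that the check step may legitimately restrict its search to proper subsets of $X_i$, that is, that some global minimizer of $\tilde{v}^{\alpha_{i+1}}$ is a proper subset of $X_i$. From $\alpha_{i+1} < \alpha_i$ and \cref{lemma:throttle-strong-map} I obtain $\tilde{v}^{\alpha_{i+1}} \sqsubset \tilde{v}^{\alpha_i}$, and \cref{lemma:strong-map-sequences-minimizers} then places the minimal minimizer of $\tilde{v}^{\alpha_{i+1}}$ inside that of $\tilde{v}^{\alpha_i}$, which in turn lies in $X_i$; so a global minimizer of $\tilde{v}^{\alpha_{i+1}}$ is contained in $X_i$. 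For \emph{proper} containment I would exploit that $\tilde{v}^{\alpha}(\emptyset) = v^{\alpha}(\hat{Q}) = 0$, since $\hat{Q}$ is tight (as established in the proof of \cref{lemma:alpha-minimizer-within-Q-R}). Consequently $X_i \neq \emptyset$, because $X_i$ is violated while $\emptyset$ is not, and two cases remain: if $\alpha_{i+1}$ is infeasible then every minimizer is violated, whereas $\tilde{v}^{\alpha_{i+1}}(X_i) \geq 0$ by the jump step, so the minimal minimizer differs from $X_i$; if $\alpha_{i+1}$ is feasible then $\emptyset$ is itself a minimizer and $\emptyset \subsetneq X_i$. Either way the check step returns a global minimizer $X_{i+1} \subsetneq X_i$, which preserves invariants (i) and (iii).

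Finally I would conclude. By invariant (iii) the iterates strictly decrease inside $2^{\hat{R} \setminus \hat{Q}}$ and are bounded below by $\emptyset$, so the loop performs at most $|\hat{R} \setminus \hat{Q}| = |R \setminus Q| \leq |S|$ iterations, giving the stated bound. Upon termination $\tilde{v}^{\alpha_i}(X_i) \geq 0$ together with invariant (i) forces $\tilde{v}^{\alpha_i}(X) \geq 0$ for all $X$, so $\alpha_i$ is feasible by \cref{corr:faster-feasibility} and thus $\alpha_i \leq \alpha^*$; combined with invariant (ii) this gives $\alpha_i = \alpha^*$. The main obstacle is precisely the properness argument in the check step, namely guaranteeing that restricting to $X_{i+1} \subsetneq X_i$ never discards the only global minimizer; this is where the tightness of $\hat{Q}$ (through $\tilde{v}^{\alpha}(\emptyset) = 0$) and the strong map property must be combined.
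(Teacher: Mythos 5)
Your proposal is correct and follows essentially the same route as the paper's proof: monotonicity (\cref{lemma:throttle-delay-monotonic}) forces $\alpha_{i+1} < \alpha_i$, the strong map property (\cref{lemma:throttle-strong-map,lemma:strong-map-sequences-minimizers}) nests the minimal minimizers, and the jump-step guarantee $\tilde{v}^{\alpha_{i+1}}(X_i) \geq 0$ yields proper containment and hence the $|S|$-iteration bound. The only (harmless) variations are that you conclude maximality via the invariant $\alpha_i \geq \alpha^*$ instead of the paper's direct observation that any $\alpha > \alpha_{i^*}$ violates $X_{i^*-1}$, and that you make explicit two details the paper leaves implicit, namely the well-definedness of the jump step (feasibility of $\alpha = 0$) and the final iteration where $\emptyset$ is the global minimizer.
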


\begin{proof}
    We first show by induction that if $ \alpha_{i} $ is infeasible, then there exists a minimizer $ X_{i} $ of~$ \tilde{v}^{\alpha_{i}} $ with $ X_i \subset X_{i-1} $, where $ X_0 = \hat{R} \setminus \hat{Q} $.
    
    By definition,~$ X_1 $ is a minimizer of $ \tilde{v}^{\alpha_{1}} $.
    Let~$\alpha_i$ and~$ \alpha_{i+1} $ be infeasible, and assume that~$ X_{i} $ is a minimizer of $ \tilde{v}^{\alpha_{i}} $.
    We have $ \alpha_{i+1} < \alpha_{i} $, since otherwise $ \tilde{v}^{\alpha_{i+1}}(X_{i}) \overset{\text{Lem.~\ref{lemma:throttle-delay-monotonic}}}{\leq} \tilde{v}^{\alpha_{i}}(X_{i}) < 0 $ contradicts the choice of $ \alpha_{i+1} $ as feasible for $ X_i $.
    Together with \cref{lemma:throttle-strong-map}, this implies the relation~$ \tilde{v}^{\alpha_i} \sqsupset \tilde{v}^{\alpha_{i+1}} $.
    Therefore, we have $ X_{i+1}^\text{min} \subseteq X_{i}^\text{min} \subseteq X_{i} $, with $ X^\text{min}_j $ being the minimal minimizer of $ \tilde{v}^{\alpha_j}$ for $ j\in \{i,i+1\}$.
    Since $\alpha_{i+1}$  is infeasible with $\tilde{v}^{\alpha_{i+1}}(X^\text{min}_{i+1}) <0 $, and, by definition, $\tilde{v}^{\alpha_{i+1}}(X_i) \geq 0$, we even have $ X^\text{min}_{i+1} \subset X_{i} $.
    Hence, $ X_{i+1}^\text{min} $ is a feasible choice for~$ X_{i+1} $ in Algorithm~\ref{alg:alpha-parametric-search}.

    Recall that Algorithm~\ref{alg:alpha-parametric-search} terminates after at most $ |S| $ iterations.
    Let $ \alpha_{i^*} $ be the value returned by the algorithm, and let $ X_{i^*} $ be the minimizer generated in the final iteration $i^*$.
    If $ \alpha_{i^*} $ were not feasible, then $ X_{i^*} $ would be a minimizer with $v^{\alpha_{i^*}}(X_{i^*}) < 0$.
    However, since the algorithm terminates, it follows that $ \alpha_{i^*} $ is feasible.
    
    To see that $\alpha_{i^*}$ is maximum, recall that~$\alpha_1 = \alpha_\text{max}$ is infeasible, and thus the jump step is executed at least once.
    By construction in the jump, the value $ \alpha_{i^*} $ is maximum such that the previous minimizer $ X_{i^*-1} $ is no longer violated.
    Hence, for any $\alpha> \alpha_{i^*}$, we have $ v^{\alpha}(X_{i^*-1}) < 0$, so $ \alpha $ is infeasible.
    Therefore, $ \alpha_{i^*} $ is the optimal solution to \maxAlpha.
    
    The proof for Algorithm~\ref{alg:delta-parametric-search} is analogous.
\end{proof}

The remainder of the section is devoted to the runtime analysis. 
Our main improvement is due to the upper bound $ k = |S| $ on the number of iterations the algorithms execute. 

\begin{theorem} \label{theorem:runtime-maximize-alpha}
    Algorithm~\ref{alg:alpha-parametric-search} can be implemented to solve \maxAlpha in strongly polynomial time of $ \mathcal{O}(k [\sfm(k, n, m) + \mcf(n, m)^2]) $ and in weakly polynomial time of $ \mathcal{O}(k[\sfm(k, n, m) + \log (nU_\text{max}) \cdot \mcf(n, m)]) $.
\end{theorem}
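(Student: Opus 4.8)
The plan is to analyze the cost of a single iteration of the while loop and then multiply by the $ \mathcal{O}(k) $ iteration bound established in \cref{theorem:correctness}. Each iteration consists of three components: the initial computation of $ X_1 $, the \emph{jump} step, and the \emph{check} step. The check step finds a minimizer of $ \tilde{v}^{\alpha_{i+1}} $ restricted to the search space $ X \subset X_i $, which is a single submodular function minimization over a ground set of at most $ k $ elements, costing $ \mathcal{O}(\sfm(k, n, m)) $. Since there are at most $ k $ iterations, the check steps contribute a total of $ \mathcal{O}(k \cdot \sfm(k, n, m)) $, matching the first summand of both claimed bounds.

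The substance of the argument lies in bounding the \emph{jump} step, where we seek the largest $ \alpha \geq 0 $ with $ \tilde{v}^\alpha(X_i) \geq 0 $ for a \emph{fixed} set $ X_i $. First I would observe that, by the decomposition in \cref{eq:alpha-decomposition}, the value $ \tilde{v}^\alpha(X_i) $ depends on $ \alpha $ only through the term $ \tilde{o}^\alpha(X_i) - \tilde{o}^\alpha(\emptyset) $, and by \cref{lemma:throttle-delay-monotonic} this map is monotonically non-increasing in $ \alpha $. Thus finding the threshold $ \alpha_{i+1} $ is a one-dimensional search over a monotone function whose evaluation no longer requires submodular minimization: each evaluation of $ \tilde{v}^\alpha(X_i) $ for the fixed set $ X_i $ amounts to computing a constant number of maximum out-flows, i.e. static min-cost flows, at cost $ \mathcal{O}(\mcf(n, m)) $. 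For the weakly polynomial bound, a binary search over $ \alpha \in [0, nU_\text{max}] $ uses $ \mathcal{O}(\log(nU_\text{max})) $ evaluations, giving $ \mathcal{O}(\log(nU_\text{max}) \cdot \mcf(n, m)) $ per jump and hence the second summand of the weakly polynomial bound.

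For the strongly polynomial bound I would instead apply \megiddo's parametric search to the single jump step. Because $ X_i $ is fixed, the quantity $ o^\alpha(X_i) $ is computed by a min-cost flow algorithm whose comparisons and arithmetic operations depend on $ \alpha $ only through the capacity of the single arc $ (\shat, s) $; this is exactly the setting where Megiddo's technique turns one evaluation into a parametric one at a quadratic overhead, yielding $ \mathcal{O}(\mcf(n, m)^2) $ per jump step rather than the $ \mathcal{O}(\sfm(k,n,m)^2) $ incurred when the parameter is threaded through a full submodular minimization. Summing the per-iteration cost $ \mathcal{O}(\sfm(k, n, m) + \mcf(n, m)^2) $ over the $ \mathcal{O}(k) $ iterations gives the claimed strongly polynomial runtime.

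The main obstacle I anticipate is justifying rigorously that the jump step can be decoupled from submodular minimization, that is, that evaluating $ \tilde{v}^\alpha(X_i) $ for the single fixed set $ X_i $ genuinely costs only $ \mathcal{O}(\mcf(n,m)) $ and that Megiddo's search applies cleanly to the min-cost flow computation of $ o^\alpha(X_i) $ as a function of the arc capacity $ \alpha $. This is the crucial structural gain over Hoppe and Tardos, who run the parametric search over the feasibility oracle as a whole; here the strong map property and the monotonicity from \cref{lemma:throttle-delay-monotonic} are what let us fix the minimizer between the jump and check steps, so I would take care to state precisely why the set $ X_i $ used in the jump is exactly the minimizer produced in the previous check, and why the threshold it determines is the correct next parameter value $ \alpha_{i+1} $.
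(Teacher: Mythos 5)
Your proposal is correct and follows essentially the same route as the paper's proof: bound each iteration by one restricted submodular minimization for the check step plus a one-dimensional search for the jump step (Megiddo for the strongly polynomial bound, binary search over $[0, nU_\text{max}]$ for the weakly polynomial one), and multiply by the $\mathcal{O}(k)$ iteration bound. Your added justification that evaluating $\tilde{v}^\alpha(X_i)$ for a fixed $X_i$ reduces to a constant number of min-cost flow computations is exactly the (implicit) reason the paper charges only $\mathcal{O}(\mcf(n,m))$ rather than $\mathcal{O}(\sfm(k,n,m))$ per jump evaluation.
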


\begin{proof}
    The runtime is determined by the two main steps performed in each iteration:
    \begin{itemize}
        \item Jump can be implemented using \megiddo's parametric search \cite{megiddo1978combinatorial} or binary search over the range $ [0, nU_\text{max}]$ in conjunction with a minimum cost flow algorithm.
        The former results in a runtime of $ \mathcal{O}(\mcf(n, m)^2) $, the latter in $ \mathcal{O}(\log (n U_\text{max}) \cdot \mcf(n, m)) $.
        \item Check minimizes the submodular function $ v^\alpha $ on the restricted domain.
        In the worst case, this takes $ \mathcal{O}(\sfm(k, n, m)) $ time.
    \end{itemize}
    Together with the upper bound of $ k $ on the iterations of the while loop, the final runtime is $ \mathcal{O}(k [\sfm(k, n, m) + \mcf(n, m)^2]) $ or $ \mathcal{O}(k [\sfm(k, n, m) + \log (nU_\text{max}) \cdot \mcf(n, m)]) $.
\end{proof}

We obtain an analogous runtime for Algorithm~\ref{alg:delta-parametric-search}.

\begin{theorem} \label{theorem:runtime-minimize-delta}
    Algorithm~\ref{alg:delta-parametric-search} can be implemented to solve \minDelta in strongly polynomial time of $ \mathcal{O}(k [\sfm(k, n, m) + \mcf(n, m)^2]) $ and in weakly polynomial time of $ \mathcal{O}(k[\sfm(k, n, m) + \log (T) \cdot \mcf(n, m)]) $.
\end{theorem}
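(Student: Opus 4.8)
The plan is to mirror the proof of \cref{theorem:runtime-maximize-alpha} almost verbatim, isolating the cost of a single pass through the while loop of Algorithm~\ref{alg:delta-parametric-search} and then multiplying by the iteration bound. By \cref{theorem:correctness} the algorithm performs at most $|S| = k$ iterations, so it suffices to show that one iteration costs $\mathcal{O}(\sfm(k,n,m) + \mcf(n,m)^2)$ in the strongly polynomial regime and $\mathcal{O}(\sfm(k,n,m) + \log(T)\cdot\mcf(n,m))$ in the weakly polynomial regime. As before, each iteration decomposes into a check step and a jump step, which I would bound separately.

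First I would treat the check step. Here we minimize the restricted submodular function $\tilde{v}^\delta$ over the ground set $\hat{R}\setminus\hat{Q}$ for a fixed parameter $\delta_{i+1}$, which costs $\mathcal{O}(\sfm(k,n,m))$ by definition of $\sfm$. The restriction to subsets $X \subset X_i$ guaranteed by \cref{theorem:correctness} only shrinks the domain, so this bound is never exceeded.

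Next I would treat the jump step, which computes the smallest $\delta \geq 0$ with $\tilde{v}^\delta(X_i) \geq 0$ for the \emph{fixed} set $X_i$. The key observation is that for fixed $X_i$ the value $\tilde{v}^\delta(X_i)$ is determined by only a constant number of maximum-out-flow evaluations as functions of $\delta$ — namely $\tilde{o}^\delta(X_i) = o^\delta(\hat{Q}\cup X_i)$ and $o^\delta(\hat{Q})$ entering $\Delta^\delta$, together with the $\delta$-independent constant $o^\delta(Q) - b(Z\setminus\{\shat\})$ from the decomposition in the proof of \cref{lemma:throttle-delay-monotonic} — each of which is a single max flow over time computable by a min-cost flow in $\mathcal{O}(\mcf(n,m))$ time. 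By \cref{lemma:throttle-delay-monotonic} the map $\delta \mapsto \tilde{v}^\delta(X_i)$ is monotonically nondecreasing, so a monotone search locates the smallest feasible $\delta$ correctly. A binary search over the admissible range $[0,T]$ yields $\mathcal{O}(\log(T)\cdot\mcf(n,m))$ and hence the weakly polynomial bound, while replacing the binary search by \megiddo's parametric search~\cite{megiddo1978combinatorial} removes the dependence on $T$ at the cost of a squared factor, giving $\mathcal{O}(\mcf(n,m)^2)$ and the strongly polynomial bound.

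Summing the per-iteration costs and multiplying by the $k$-iteration bound then yields both claimed runtimes. The only point requiring genuine care — and the sole structural difference from the \maxAlpha analysis — is the range of the search in the jump step: because $\delta$ is a transit time rather than a capacity, its relevant interval is $[0,T]$ rather than $[0, nU_\text{max}]$, which is precisely what converts the $\log(nU_\text{max})$ factor of \cref{theorem:runtime-maximize-alpha} into the $\log(T)$ factor claimed here. I expect the mild subtlety of the proof to be arguing that a single jump evaluation reduces to a fixed number of min-cost flow computations rather than a full submodular minimization, since $X_i$ is held fixed throughout the jump.
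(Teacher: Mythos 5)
Your proposal is correct and follows exactly the route the paper takes: it mirrors the analysis of \cref{theorem:runtime-maximize-alpha} per iteration (check step via $\sfm$, jump step via \megiddo\ or binary search), multiplies by the $k$-iteration bound from \cref{theorem:correctness}, and correctly identifies the only genuine difference — the search range $[0,T]$ for the transit-time parameter, which produces the $\log(T)$ factor. The paper's own proof is just the two-line remark that the argument is analogous with this changed range, so your additional detail on why a jump evaluation needs only $\mathcal{O}(1)$ min-cost flow computations is a harmless elaboration, not a deviation.
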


\begin{proof}
    The proof is analogous to that of \Cref{theorem:runtime-maximize-alpha}.
    The binary search for the minimum feasible value of $\delta$ is done on the range $[0,T]$, which yields the different logarithmic term.
\end{proof}

We conclude this section with an improved runtime for computing integral dynamic transshipments compared to the state-of-the-art runtime of $ \tilde{\mathcal{O}}(m^4 k^{15}) $.

\begin{theorem}\label{thm:totalRuntime}
    Given a dynamic transshipment instance $ (\dnet, b, T) $, an integral quickest transshipment can be computed in $ \tilde{\mathcal{O}}(m^2k^5 + m^4k^2) $ time.
\end{theorem}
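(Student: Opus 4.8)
The plan is to assemble the per-subroutine runtimes from \cref{theorem:runtime-maximize-alpha} and \cref{theorem:runtime-minimize-delta} with the iteration count of the Hoppe and Tardos algorithm and the concrete complexities $ \sfm(k,n,m) = \tilde{\mathcal{O}}(m^2 k^3) $ and $ \mcf(n,m) = \tilde{\mathcal{O}}(m^2) $ established in the preliminaries. First I would recall that, with the minimum time horizon $ T^* $ at hand, the algorithm transforms $ (\dnet, b, T) $ into an equivalent instance admitting a tight order using exactly $ k $ iterations, each of which invokes \maxAlpha once and \minDelta once.

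By \cref{theorem:runtime-maximize-alpha} and \cref{theorem:runtime-minimize-delta}, each such call runs in strongly polynomial time $ \mathcal{O}(k[\sfm(k,n,m) + \mcf(n,m)^2]) $, so a single iteration of the transformation costs $ \mathcal{O}(k[\sfm(k,n,m) + \mcf(n,m)^2]) $ as well. Accumulated over all $ k $ iterations, this amounts to $ \mathcal{O}(k^2[\sfm(k,n,m) + \mcf(n,m)^2]) $. Substituting $ \sfm(k,n,m) = \tilde{\mathcal{O}}(m^2 k^3) $ and $ \mcf(n,m)^2 = \tilde{\mathcal{O}}(m^4) $ gives $ \tilde{\mathcal{O}}(k^2(m^2 k^3 + m^4)) = \tilde{\mathcal{O}}(m^2 k^5 + m^4 k^2) $ for the entire construction of a tightly ordered instance.

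Finally I would absorb the remaining work. Once a tight order is available, \cref{theorem:tightness} produces the integral dynamic transshipment as a lex-max flow over time in $ \mathcal{O}(k \mcf(n,m)) = \tilde{\mathcal{O}}(k m^2) $ time, which is dominated by the construction cost; the initial step introducing the drained terminals and the auxiliary arcs is linear in the instance size and likewise negligible. Summing these contributions establishes the claimed bound $ \tilde{\mathcal{O}}(m^2 k^5 + m^4 k^2) $.

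The main obstacle here is careful bookkeeping rather than a technical hurdle: I must confirm that the iteration count stays at $ k $ independently of the faster subroutines, and pin down which summand each step contributes---the $ m^4 k^2 $ term arising from the $ \mcf(n,m)^2 $ cost of the strongly polynomial, \megiddo-based jump step accumulated over the $ k^2 $ total jumps, and the $ m^2 k^5 $ term arising from the $ \sfm $ cost of the check step over the same $ k^2 $ minimizations---while verifying that no auxiliary step, in particular the concluding lex-max flow, exceeds these two terms. I would also remark that the weakly polynomial variants of the subroutines replace the $ \mcf(n,m)^2 $ factor by $ \log(nU_\text{max}) \cdot \mcf(n,m) $ and $ \log(T) \cdot \mcf(n,m) $; under $ \tilde{\mathcal{O}} $ these logarithmic factors are suppressed, so it is the strongly polynomial bound that is responsible for the $ m^4 k^2 $ summand.
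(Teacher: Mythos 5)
Your accounting of the transshipment computation itself matches the paper's proof almost line for line: $k$ iterations of the Hoppe--Tardos loop, each invoking \maxAlpha and \minDelta once at cost $\mathcal{O}(k[\sfm(k,n,m)+\mcf(n,m)^2])$, giving $\mathcal{O}(k^2[\sfm(k,n,m)+\mcf(n,m)^2]) = \tilde{\mathcal{O}}(m^2k^5 + m^4k^2)$ after substituting $\sfm(k,n,m)=\tilde{\mathcal{O}}(m^2k^3)$ and $\mcf(n,m)=\tilde{\mathcal{O}}(m^2)$. Your explicit inclusion of the concluding lex-max flow from \cref{theorem:tightness} is a detail the paper leaves implicit, and your attribution of the two summands to the jump and check steps is accurate.

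There is, however, one genuine gap: you begin "with the minimum time horizon $T^*$ at hand," but the theorem claims a \emph{quickest} transshipment, so the cost of determining $T^*$ must be part of the bound --- the body of the paper only assumes $T^*$ is provided for expository convenience, and this theorem is precisely where that assumption has to be discharged. The paper's proof does this by invoking the algorithm of Schl\"oter, Tran and Skutella, which finds $T^*$ in $\tilde{\mathcal{O}}(m^2k^5 + m^3k^3 + m^3n)$ time, and then arguing that this is dominated by your bound: since the network is connected, $m \geq n \geq k$, hence $\tilde{\mathcal{O}}(m^3n)\subset\tilde{\mathcal{O}}(m^4)\subset\tilde{\mathcal{O}}(m^4k^2)$ and $\tilde{\mathcal{O}}(m^3k^3)\subset\tilde{\mathcal{O}}(m^4k^2)$. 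Without this step (or some other account of how $T^*$ is obtained), your argument only establishes the runtime for the dynamic transshipment problem with a given feasible horizon, not for the quickest transshipment problem as stated.
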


\begin{proof}
    Hoppe and Tardos~\cite{hoppe2000quickest} already proved that their algorithm for dynamic transshipment terminates after $ \mathcal{O}(k) $ iterations, each of which consists of one call of Algorithms~\ref{alg:alpha-parametric-search} and~\ref{alg:delta-parametric-search}.
    Given the strongly polynomial runtime of $ \mathcal{O}\big(k (\sfm(k, n, m) + \mcf(n, m)^2)\big) $ for both subroutines, we obtain a worst-case complexity of $ \mathcal{O}(k^2 (\sfm(k, n, m) + \mcf(n, m)^2)) $.
    
    Suppressing the polylogarithmic terms, we have a time complexity of $ \tilde{\mathcal{O}}(m^2 k^3) $ for $ \mathcal{O}(\sfm(k, n, m)) $ and of $\tilde{\mathcal{O}}(m^2)$ for $\mathcal{O}(\mcf(n,m))$.
    Overall, we obtain an improved runtime of $ \tilde{\mathcal{O}}(m^2 k^5 + m^4 k^3) $ time for the integral dynamic transshipment problem.

    In order to compute a quickest transshipment, we have to determine the minimum time horizon first.
    This can be done by the method by Schlöter, Tran and Skutella~\cite{schloter2022faster} in $\tilde{\mathcal{O}}(m^2 k^5 + m^3 k^3 + m^3n)$ time.
    This runtime is dominated by that of our algorithm for computing the corresponding transshipment: since we assume that the network is connected, we have $m\geq n \geq k$, and thus $\tilde{\mathcal{O}}(m^3n) \subset \tilde{\mathcal{O}}(m^4)$ and $\tilde{\mathcal{O}}(m^3 k^3) \subset \tilde{\mathcal{O}}(m^4 k^2)$.
    Overall, the time required to compute a quickest integral transshipment is
    \[
        \tilde{\mathcal{O}}\big((m^2 k^5 + m^3 k^3 + m^3n) + (m^2 k^5 + m^4 k^2)\big) = \tilde{\mathcal{O}}(m^2 k^5 + m^4 k^2 + m^4) = \tilde{\mathcal{O}}(m^2 k^5 + m^4 k^2),
    \]
    which shows \cref{thm:totalRuntime}.
\end{proof}

\section{Conclusion and Outlook}
In this paper, we propose an improved version of the algorithm by Hoppe and Tardos for the integral quickest transshipment problem.
Our approach is based on more efficient parametric search algorithms using the strong map property and yields a substantial reduction of the runtime from~$ \tilde{\mathcal{O}}(m^4 k^{15}) $ to~$ \tilde{\mathcal{O}}(m^2 k^5 + m^4 k^2) $.

Our findings open room for ensuing research.
In particular, the restrictions of submodular functions to suitable domains introduced in this paper may provide even better bounds on the runtime for our algorithms.
Furthermore, we see potential improvements to the jump steps in Algorithms~\ref{alg:alpha-parametric-search} and \ref{alg:delta-parametric-search} that are currently based on \megiddo's parametric search and contribute a factor of $ \tilde{\mathcal{O}}(m^4) $ to the runtime.
This factor constitutes the remaining gap in runtime between the integral and the fractional quickest transshipment problem.
In order to close this gap, future studies may focus on adapting parametric minimum cost flow algorithms akin to the algorithms by Lin and Jaillet~\cite{lin2014quickest} and Saho and Shigeno~\cite{saho2017cancel}.

\printbibliography

@article{hoppe2000quickest,
  title={The quickest transshipment problem},
  author={Hoppe, Bruce and Tardos, {\'E}va},
  journal={Mathematics of Operations Research},
  volume={25},
  number={1},
  pages={36--62},
  year={2000},
  publisher={INFORMS}
}

@article{skutella2009introduction,
  title={An introduction to network flows over time},
  author={Skutella, Martin},
  journal={Research Trends in Combinatorial Optimization: Bonn 2008},
  pages={451--482},
  year={2009},
  publisher={Springer}
}

@book{schloter2018flows,
  title={Flows over time and submodular function minimization},
  author={Schl{\"o}ter, Miriam},
  year={2018},
  publisher={Technische Universitaet Berlin (Germany)}
}

@article{skutella2023introduction,
  title={An Introduction to Transshipments Over Time},
  author={Skutella, Martin},
  journal={arXiv preprint arXiv:2312.04991},
  year={2023}
}

@article{topkis_minimizing_1978,
	title = {Minimizing a {Submodular} {Function} on a {Lattice}},
	volume = {26},
	language = {en},
	number = {2},
	journal = {Operations Research},
	author = {Topkis, Donald M.},
	year = {1978},
	pages = {305--321}
}

@inproceedings{lee2015faster,
  title={A faster cutting plane method and its implications for combinatorial and convex optimization},
  author={Lee, Yin Tat and Sidford, Aaron and Wong, Sam Chiu-Wai},
  booktitle={2015 IEEE 56th Annual Symposium on Foundations of Computer Science (FOCS)},
  pages={1049--1065},
  year={2015},
  organization={IEEE}
}

@inproceedings{lin2014quickest,
  title={On the quickest flow problem in dynamic networks--a parametric min-cost flow approach},
  author={Lin, Maokai and Jaillet, Patrick},
  booktitle={Twenty-sixth annual ACM-SIAM symposium on discrete algorithms (SODA)},
  pages={1343--1356},
  year={2014},
  organization={SIAM}
}

@inproceedings{schloter2022faster,
  title={A Faster Algorithm for Quickest Transshipments via an Extended Discrete Newton Method},
  author={Schl{\"o}ter, Miriam and Skutella, Martin and Tran, Khai Van},
  booktitle={2022 Symposium on Discrete Algorithms (SODA)},
  pages={90--102},
  year={2022},
  organization={SIAM}
}

@inproceedings{orlin1988faster,
  title={A faster strongly polynomial minimum cost flow algorithm},
  author={Orlin, James},
  booktitle={Twentieth annual ACM symposium on Theory of Computing (STOC)},
  pages={377--387},
  year={1988}
}

@book{ford1962flows,
 ISBN = {9780691625393},
 author = {Ford, Lester R. and Fulkerson, Delbert R.},
 publisher = {Princeton University Press},
 title = {{Flows in Networks}},
 year = {1962}
}

@article{burkard1993quickest,
  title={The quickest flow problem},
  author={Burkard, Rainer E. and Dlaska, Karin and Klinz, Bettina},
  journal={Zeitschrift f{\"u}r Operations Research},
  volume={37},
  number={1},
  pages={31--58},
  year={1993},
  publisher={Springer}
}

@article{saho2017cancel,
  title={Cancel-and-tighten algorithm for quickest flow problems},
  author={Saho, Masahide and Shigeno, Maiko},
  journal={Networks},
  volume={69},
  number={2},
  pages={179--188},
  year={2017},
  publisher={Wiley Online Library}
}

@inproceedings{megiddo1978combinatorial,
  title={Combinatorial optimization with rational objective functions},
  author={Megiddo, Nimrod},
  booktitle={Tenth annual ACM symposium on Theory of computing (STOC)},
  pages={1--12},
  year={1978}
}

\appendix
\newpage
\section{Algorithm of Hoppe and Tardos for the Quickest Transshipment Problem}\label{apx:pseudocode}

\begin{figure}[ht]
    \centering
    \begin{algorithm}[H]
    \label{alg:hoppe-tardos}
        \SetAlgoLined
        \caption{Algorithm by Hoppe and Tardos \cite{hoppe2000quickest}}
        
        \KwData{A feasible dynamic transshipment instance $ (\dnet, b, T) $ with terminals $ S $}
        \KwResult{Modified instance $ (\dnet', b', T) $ with terminals $ S' $ and tight order $ \prec $ over $ S' $}
        $ S' \gets \{ \check{s} \mid s \in S \} $ \\
        $ b(\scheck) \gets b(s) $ for every $ s \in S $ \\
        $ V(\dnet') \gets V(\dnet) \cup S' $ \\
        $ \widetilde{A} \gets \{ (\check{s}, s) \mid s \in S^+ \} \cup \{ (s, \check{s}) \mid s \in S^- \} $ \\
        $ A(\dnet') \gets A(\dnet) \cup \widetilde{A} $ \\
        $ u_a = \infty $ for all $ a \in \widetilde{A} $ \\
        $ \tau_a = 0 $ for all $ a \in \widetilde{A} $ \\
        $ \mathcal{C} \gets $ Initial chain of thight sets $ (\emptyset, S') $\\

        \While{$ |\mathcal{C}| < |S| + 1 $}{
            $ Q, R \gets $ Neighboring sets in $ \mathcal{C} $ with $ |R \setminus Q| > 1 $ \\
            $ \check{s} \gets $ Arbitrary first terminal $ \check{s} \in R \setminus Q $ \\
            \eIf{$ \check{s} $ is source}{
                \eIf{$ Q \cup \{ \check{s} \}$ is tight}{
                    $ \mathcal{C} \gets \mathcal{C} $ with  $ Q \cup \{ \check{s} \} $ inserted between $ Q $ and $ R $ \\ 
                }{
                    $ \alpha^* \gets \maxAlpha(\dnet', b', T, Q, R, \check{s}) $ \\
                    $ (\dnet', b', T) \gets (\dnet^{\alpha^*}, b^{\alpha^*}, T) $ for $ \check{s} $ and new source $ \hat{s} $ \\
                    $ Q' \gets Q \cup \{ \hat{s} \} $ \\
                    $ \delta^* \gets \minDelta(\dnet', b', T, Q', R, \check{s}) $ \\
                    $ (\dnet', b', T) \gets (\dnet^{\delta^*}, b^{\delta^*}, T) $ for $ \check{s} $ and new source $ \hat{s} $ \\
                    $ Q'' \gets Q' \cup \{ \hat{s} \} $ \\
                    $ W \gets $ Minimizer of $ v^{\delta^*-1} $ \\
                    $ \mathcal{C} \gets \mathcal{C} $ with $ Q', Q'', Q'' \cup \{ W \cap R \} $ inserted between $ Q $ and $ R $ \\
                    Replace $ R $ in $ \mathcal{C} $ with $ R \cup Q'' $ and every $ X \in \mathcal{C} $ after $ R $ with $ Q'' \cup X $ \\
                }
            }{
                \tcp{Symmetrical to sources...}
            }
        }
        \Return{$ (\dnet', b', T) $, $ S' $, $ s \prec s' $ if there exist $ X, Y \in \mathcal{C} $ with $ s \in X $ and $ s' \in Y \setminus X $}
    \end{algorithm}
\end{figure}

\section{Symmetrical Treatment of Sinks}\label{apx:symmetry}

This section discusses how the results of this paper can be applied to the case in which~$ \scheck $ is a sink.
Although all cases are symmetrical, the exact execution of proofs may not be completely clear without further study.
Most of the differences are due to the asymmetrical definition of $ o(X) $ for sources and sinks:
while a source $ s \in S^+ \cap X$ may send flow and thus positively contributes to the out-flow~$ o(X) $, a sink~$ t \in S^-\cap X $ may not receive flow, and thus negatively contributes to~$o(X)$.

The first difference is that subroutines \maxAlpha and \minDelta are executed only if~$ R \setminus \{ \scheck \} $ is not tight, as otherwise~$ R \setminus \{ \scheck \} $ is a sought tight set.

Furthermore, the definitions of $ (\dnet^\alpha, b^\alpha, T) $ and $ (\dnet^\delta, b^\delta, T) $ are slightly modified: given a drained sink~$ \scheck \in R \setminus Q $ connected to a sink $ s $ from the original instance~$ (\dnet, b, T) $, a new sink~$ \hat{s} $ is introduced and connected to~$ s $ via an arc~$ (s, \shat) $.
The arc's capacity and transit time are identical to those in the source case.
However, the main difference is the definition of $ \Delta^\alpha $ and $ \Delta^\delta $ as $ \Delta^\alpha \coloneqq o^\alpha(R \cup \{ \shat \}) - o^\alpha(R) $ and $ \Delta^\delta \coloneqq o^\delta(R \cup \{ \shat \}) - o^\delta(R) $, respectively.

Based on these modified definitions, the following property can be proved analogously to \cref{lemma:s_hat-s_check-violator}.

\begin{lemma} \label{lemma:s_hat-s_check-violator-symmetric-sink}
    Let $ X \subseteq S \cup \{ \shat \} $ be a violated set for an $ \alpha $-parametric dynamic transshipment instance with respect to a drained sink $ \scheck $.
    Then $ \shat \not \in X $ and $ \scheck \in X $.
    The same applies to $ \delta $-parametric dynamic transshipment instances.
\end{lemma}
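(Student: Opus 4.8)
The plan is to mirror the proof of \cref{lemma:s_hat-s_check-violator} step by step, swapping the roles that $\shat$ and $\scheck$ play, so that the two out-flow observations and the subsequent three-case distinction carry over to the sink setting. The only genuinely new ingredients are the reversed arc $(s, \shat)$ and the modified definition $\Delta^\alpha \define o^\alpha(R \cup \{\shat\}) - o^\alpha(R)$, so most of the work lies in verifying that these changes flip the surviving configuration from $\{\shat \in X,\ \scheck \notin X\}$ in the source case to $\{\shat \notin X,\ \scheck \in X\}$ here.

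First I would restate the two out-flow observations for a drained sink. Since $\shat$ is now a sink attached to $s$ only by the incoming arc $(s, \shat)$, whenever $\shat \in X$ it lies in $S^- \cap X$ and is thus not connected to the super-sink; as $\shat$ has no outgoing arc, flow conservation forces zero flow across $(s, \shat)$, so $o^\alpha(X) = o(X)$. Symmetrically to observation O2, whenever $\scheck \notin X$ the drained sink $\scheck \in S^- \setminus X$ is joined to $s$ by the infinite-capacity arc $(s, \scheck)$ and onward to the super-sink, so any flow that could leave $s$ through the capacity-$\alpha$ arc $(s, \shat)$ can instead be absorbed through $\scheck$; hence again $o^\alpha(X) = o(X)$. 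These are exactly the sink analogs of O1 and O2 and require only a careful reading of the out-flow definition with the reversed arc directions.

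Next I would pin down the sign of $\Delta^\alpha$, which is the crux of the argument. Using the first observation, $o^\alpha(R \cup \{\shat\}) = o(R)$ because $\shat \in R \cup \{\shat\}$ cannot receive flow, whereas in $o^\alpha(R)$ the sink $\shat \notin R$ is an \emph{additional} receiver, so $o^\alpha(R) \geq o(R)$. Consequently $\Delta^\alpha = o(R) - o^\alpha(R) \leq 0$, reflecting that demand is shifted from $\scheck$ onto $\shat$. With the balances $b^\alpha(\shat) = \Delta^\alpha$ and $b^\alpha(\scheck) = b(\scheck) - \Delta^\alpha$, I would then run the three complementary cases. If $\shat \in X$ and $\scheck \in X$, the two modified balances cancel, giving $b^\alpha(X) = b(X \setminus \{\shat\})$, so by the first observation $v^\alpha(X) = v(X \setminus \{\shat\}) \geq 0$. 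If $\shat \in X$ and $\scheck \notin X$, the first observation gives $o^\alpha(X) = o(X \setminus \{\shat\})$ while $b^\alpha(X) = b(X \setminus \{\shat\}) + \Delta^\alpha \leq b(X \setminus \{\shat\})$, so $v^\alpha(X) \geq v(X \setminus \{\shat\}) \geq 0$. If $\shat \notin X$ and $\scheck \notin X$, the second observation yields $o^\alpha(X) = o(X)$ and $b^\alpha(X) = b(X)$, hence $v^\alpha(X) = v(X) \geq 0$. In all three cases $X$ is not violated, leaving $\shat \notin X$ and $\scheck \in X$ as the only possibility.

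The main obstacle I anticipate is not the algebra but getting the two observations exactly right under the reversed arc orientation: one must argue that an in-only sink node carries no flow, and that the infinite-capacity drained sink subsumes the parametric arc, which is precisely the mirror image of the bypass argument in \cref{lemma:s_hat-s_check-violator}. Once the sign $\Delta^\alpha \leq 0$ is established, the case distinction is routine, and the $\delta$-parametric case follows identically: the unit-capacity, $\delta$-transit arc $(s, \shat)$ behaves the same way with respect to both observations, since an in-only sink still carries no flow and $\scheck$ still subsumes it, yielding $\Delta^\delta \leq 0$ by the same reasoning.
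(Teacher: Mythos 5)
Your proof is correct and follows exactly the route the paper intends: the paper gives no explicit proof for this lemma, stating only that it is ``analogous'' to \cref{lemma:s_hat-s_check-violator}, and you carry out that analogy faithfully --- mirroring observations O1/O2 for the reversed arc $(s,\shat)$, establishing the key sign flip $\Delta^\alpha \leq 0$ from the modified definition $\Delta^\alpha = o^\alpha(R\cup\{\shat\})-o^\alpha(R)$, and running the three complementary cases to leave $\shat\notin X$, $\scheck\in X$ as the only surviving configuration. No gaps.
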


Again, given a minimizer $ X^* $, we can construct a new minimizer $ X \coloneqq Q \cup (R \cap X^*) $ satisfying $ Q \subset X \subset R $.
This means that $ \tilde{v}^\alpha $ and $ \tilde{v}^\delta $ are both defined on the domain $ R \setminus Q $.

The strong map properties for the restricted functions become reversed.
\begin{lemma} \label{lemma:throttle-strong-map-sink}
    Let $ 0 \leq \alpha \leq \alpha' $. Then $ \restr{v}{\alpha} \sqsupset \restr{v}{\alpha'} $.
\end{lemma}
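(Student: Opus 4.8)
The plan is to mirror the proof of \cref{lemma:throttle-strong-map}, tracking the sign reversals that arise because a sink contributes \emph{negatively} to the out-flow. As before, it suffices to prove the single-step relation $\restr{o}{\alpha} \sqsupset \restr{o}{\alpha+1}$ for the restricted out-flow, transfer it to $\restr{v}{\alpha} \sqsupset \restr{v}{\alpha+1}$, and then appeal to transitivity of $\leq$ to obtain $\restr{v}{\alpha} \sqsupset \restr{v}{\alpha'}$ for all $\alpha \leq \alpha'$. Throughout I would use that, by \cref{lemma:s_hat-s_check-violator-symmetric-sink}, the restricted functions on $R \setminus Q$ are set up so that $\scheck$ is (implicitly) present in every set while $\shat$ is excluded, with $\shat$ attached to $s$ via the capacity-$\alpha$ arc $(s, \shat)$ on which $\shat$ acts as a destination of the out-flow.

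First I would build an auxiliary network $\Naux$ as in \cref{fig:alpha-aux-network}, but with the new node $\saux$ being a \emph{sink} attached by the arc $\aaux = (s, \saux)$ with $u_\aaux = 1$ and $\tau_\aaux = 0$. Letting $\mathfrak{o}$ denote the restricted out-flow of $\Naux$, the governing observations are the reverse of the source case: for every $X \subseteq R \setminus Q$,
\begin{equation*}
    \mathfrak{o}(X) = \restr{o}{\alpha+1}(X) \qquad\text{and}\qquad \mathfrak{o}(X \cup \{ \saux \}) = \restr{o}{\alpha}(X).
\end{equation*}
The point is that $\shat$ and $\saux$ are destinations fed from $s$ through the parallel arcs $(s,\shat)$ and $(s,\saux)$; leaving $\saux$ \emph{out} of the argument keeps its unit of sink capacity active, so the two arcs jointly emulate a single capacity-$(\alpha+1)$ arc into a sink, whereas placing $\saux$ \emph{inside} the argument deactivates it as a destination and only the capacity-$\alpha$ arc remains. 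This swap of the ``with/without $\saux$'' roles, relative to \maxAlpha where the auxiliary \emph{source} was added to increase the parameter, is exactly what will flip the direction of the strong map.

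Next I would apply the decreasing-marginal form of submodularity of $\mathfrak{o}$ to the element $\saux$ and sets $X \subseteq Y \subseteq R \setminus Q$, namely $\mathfrak{o}(Y \cup \{ \saux \}) - \mathfrak{o}(Y) \leq \mathfrak{o}(X \cup \{ \saux \}) - \mathfrak{o}(X)$. Substituting the two identities above rearranges this to $\restr{o}{\alpha}(Y) - \restr{o}{\alpha}(X) \leq \restr{o}{\alpha+1}(Y) - \restr{o}{\alpha+1}(X)$, which is precisely $\restr{o}{\alpha} \sqsupset \restr{o}{\alpha+1}$. To pass to $\restr{v}{\alpha}$, observe that the only $\alpha$-dependent balances sit on the fixed terminals $\scheck$ (always present) and $\shat$ (always absent); hence for $X \subseteq Y$ the set difference $\restr{b}{\alpha}(Y) - \restr{b}{\alpha}(X) = \sum_{t \in Y \setminus X} b(t)$ does not depend on $\alpha$. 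Consequently
\begin{equation*}
    \big[\restr{v}{\alpha+1}(Y) - \restr{v}{\alpha+1}(X)\big] - \big[\restr{v}{\alpha}(Y) - \restr{v}{\alpha}(X)\big] = \big[\restr{o}{\alpha+1}(Y) - \restr{o}{\alpha+1}(X)\big] - \big[\restr{o}{\alpha}(Y) - \restr{o}{\alpha}(X)\big] \geq 0
\end{equation*}
by the out-flow relation, which gives $\restr{v}{\alpha} \sqsupset \restr{v}{\alpha+1}$ and, by transitivity, the claim.

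The step I expect to need the most care is establishing the two auxiliary-network identities and, in particular, the reversal of the ``with/without $\saux$'' roles. Because a sink lowers the out-flow, the \emph{active} extra unit of capacity corresponds to the \emph{larger} parameter $\alpha + 1$ rather than the smaller one; this is the single structural difference from \cref{lemma:throttle-strong-map} and the source of the $\sqsubset \to \sqsupset$ flip. I would also double-check that the two parallel arcs out of $s$ truly emulate one arc of the summed capacity for the max-flow value, and that confining the $\alpha$-dependence of the balance to the fixed terminals $\scheck$ and $\shat$ is what makes the balance terms drop out of the set differences above.
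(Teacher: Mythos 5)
Your proposal is correct and matches the paper's intent exactly: the paper gives no explicit proof for this lemma, stating only that the sink case is symmetric, and your argument is precisely the analogue of the proof of \cref{lemma:throttle-strong-map} with the auxiliary node $\saux$ turned into a unit-capacity sink, the roles of ``with/without $\saux$'' swapped in the two out-flow identities, and the balance differences rendered $\alpha$-independent by fixing $\scheck$ inside and $\shat$ outside every set. Your identification of the role reversal of $\saux$ as the source of the $\sqsubset\to\sqsupset$ flip is exactly the right point to emphasize.
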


\begin{lemma} \label{lemma:delay-strong-map-sink}
    Let $ 0 \leq \delta' \leq \delta $. Then $ \restr{v}{\delta} \sqsupset \restr{v}{\delta'} $.
\end{lemma}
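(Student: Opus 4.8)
The plan is to mirror the proof of \Cref{lemma:delay-strong-map} for the source case, making the modifications dictated by the sink setting: the new sink $\shat$ is now attached to $s$ by the arc $(s,\shat)$, and $\Delta^\delta = o^\delta(R \cup \{\shat\}) - o^\delta(R)$. As there, it suffices to establish the single-step relation $\restr{v}{\delta} \sqsupset \restr{v}{\delta-1}$ for every $\delta \geq 1$; the full claim $\restr{v}{\delta} \sqsupset \restr{v}{\delta'}$ for $\delta' \leq \delta$ then follows by transitivity of $\leq$ in \Cref{def:strong-map-property}. First I would prove the corresponding relation for the out-flow, namely $\restr{o}{\delta} \sqsupset \restr{o}{\delta-1}$, and then transfer it to $\restr{v}{\delta}$ through the balance function. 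Note that the direction of the strong map is \emph{reversed} relative to the source case, which is exactly what \Cref{lemma:delay-strong-map-sink} demands.

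The key step is the auxiliary-network construction, adapted to the sink version of \minDelta. Recall that the $\delta$-parametric arc $(s,\shat)$ has unit capacity and transit time $\delta$, and that on the restricted domain $\shat$ is always a sink (since the ground set is contained in $S$). I would build $\Naux$ by replacing this arc with a length-two path $s \to \saux \to \shat$ through a fresh auxiliary node $\saux$, setting $\aaux \define (s,\saux)$ with $\tau_\aaux = \delta - 1$ and $\hat{a} \define (\saux,\shat)$ with $\tau_{\hat{a}} = 1$, both of unit capacity, so that the combined transit time is again $\delta$. The auxiliary node $\saux$ is treated as an extra sink. Letting $\mathfrak{o}$ denote the maximum out-flow in $\Naux$ restricted to $\hat{R} \setminus \hat{Q} \cup \{\saux\}$, the two observations now read, for every $X \subseteq \hat{R} \setminus \hat{Q}$: if $\saux \notin X$ then $\saux$ acts as a sink, so flow may be absorbed at $\saux$ after only $\delta - 1$ transit, giving $\mathfrak{o}(X) = \restr{o}{\delta-1}(X)$; whereas if $\saux$ is added to $X$ it no longer receives flow, forcing all flow bound for the sink side through $\shat$ with full transit time $\delta$, giving $\mathfrak{o}(X \cup \{\saux\}) = \restr{o}{\delta}(X)$. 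Applying the diminishing-returns form of submodularity in \cref{eq:submodular-function} to $\mathfrak{o}$ with $e = \saux$ then yields $\restr{o}{\delta}(Y) - \restr{o}{\delta}(X) \leq \restr{o}{\delta-1}(Y) - \restr{o}{\delta-1}(X)$, that is, $\restr{o}{\delta} \sqsupset \restr{o}{\delta-1}$; the reversal of direction compared to \Cref{lemma:delay-strong-map} is produced precisely by the swapped roles of $\mathfrak{o}(X)$ and $\mathfrak{o}(X \cup \{\saux\})$.

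For the transfer to $\restr{v}{\delta}$ I would argue as in \Cref{lemma:throttle-strong-map}. On the restricted domain the sink $\shat$ is never contained in the evaluated set, so its balance $b^\delta(\shat) = \Delta^\delta$ drops out; the only $\delta$-dependent contribution to $\restr{b}{\delta}$ comes from $\scheck$, which is implicitly present and contributes $b(\scheck) - \Delta^\delta$. Hence $\restr{b}{\delta}$ depends on $\delta$ only through the additive term $-\Delta^\delta$, which is identical for every subset and therefore cancels in the strong-map differences $\restr{v}{\delta}(Y) - \restr{v}{\delta}(X)$. Combining this cancellation with $\restr{o}{\delta} \sqsupset \restr{o}{\delta-1}$ gives $\restr{v}{\delta} \sqsupset \restr{v}{\delta-1}$, and transitivity finishes the proof.

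I expect the main obstacle to be getting the auxiliary network exactly right so that the strong map genuinely reverses. Because $\shat$ is now a sink rather than a source, one must verify carefully that adding $\saux$ to $X$ turns off its sink behaviour and thereby reinstates the full transit time $\delta$ (rather than reducing it, as in the source case), and that the two competing sinks $\saux$ and $\shat$, sharing the unit-capacity arc $(s,\saux)$, route flow so that the maximum out-flow equals $\restr{o}{\delta}$ and $\restr{o}{\delta-1}$ respectively. Checking these two flow-routing identities, together with confirming that $\scheck$'s membership is constant across the restricted domain so that the $\Delta^\delta$ term truly cancels, is where the care is needed; the remaining algebra is routine and parallels the source case.
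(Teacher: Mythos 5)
Your proposal is correct and is essentially the proof the paper intends: the paper justifies \cref{lemma:delay-strong-map-sink} only by analogy with \cref{lemma:delay-strong-map}, and your auxiliary network --- splitting $(s,\shat)$ into $(s,\saux)$ of transit $\delta-1$ and $(\saux,\shat)$ of transit $1$ with $\saux$ an additional sink --- is exactly the right mirror, with the swapped identities $\mathfrak{o}(X)=\restr{o}{\delta-1}(X)$ and $\mathfrak{o}(X\cup\{\saux\})=\restr{o}{\delta}(X)$ correctly producing the reversed strong-map direction. One remark: the cancellation of $\Delta^\delta$ in your transfer step requires that $\scheck$ belong to every evaluated set (i.e., reading the restricted domain as $R\setminus(Q\cup\{\scheck\})$ with $\scheck$ implicitly included, rather than $R\setminus Q$ as the appendix literally states); you flag this correctly, and it is the same convention the paper's own transfer step in \cref{lemma:throttle-strong-map} silently relies on, since for pairs $X\subseteq Y$ with $\scheck\in Y\setminus X$ the $\Delta$-terms would not cancel.
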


From this point on, algorithms solving \maxAlpha and \minDelta for sinks can be derived analogously to Algorithms \ref{alg:alpha-parametric-search} and \ref{alg:delta-parametric-search} .

\end{document}